\documentclass[oneside,english]{amsart}
\usepackage[T1]{fontenc}
\usepackage[utf8]{inputenc}
\usepackage{color}
\usepackage{babel}
\usepackage{float}
\usepackage{booktabs}
\usepackage{url}
\usepackage{bm}
\usepackage{multirow}
\usepackage{amsbsy}
\usepackage{amstext}
\usepackage{amsthm}
\usepackage{amssymb}
\usepackage{graphicx}
\usepackage[pdfusetitle,
 bookmarks=true,bookmarksnumbered=false,bookmarksopen=false,
 breaklinks=false,pdfborder={0 0 1},backref=false,colorlinks=true]
 {hyperref}

\makeatletter

\providecommand{\tabularnewline}{\\}
\floatstyle{ruled}
\newfloat{algorithm}{tbp}{loa}
\providecommand{\algorithmname}{Algorithm}
\floatname{algorithm}{\protect\algorithmname}

\numberwithin{equation}{section}
\numberwithin{figure}{section}
\theoremstyle{plain}
\newtheorem{thm}{\protect\theoremname}[section]
\theoremstyle{definition}
\newtheorem{defn}[thm]{\protect\definitionname}
\theoremstyle{plain}
\newtheorem{lem}[thm]{\protect\lemmaname}
\theoremstyle{plain}
\newtheorem{cor}[thm]{\protect\corollaryname}
\theoremstyle{plain}
\newtheorem{prop}[thm]{\protect\propositionname}

\usepackage{amsthm}

\usepackage{algorithm}
\usepackage{algpseudocode}

\ifdefined\showcaptionsetup
 \PassOptionsToPackage{caption=false}{subfig}
\fi
\usepackage{subfig}
\makeatother

\providecommand{\corollaryname}{Corollary}
\providecommand{\definitionname}{Definition}
\providecommand{\lemmaname}{Lemma}
\providecommand{\propositionname}{Proposition}
\providecommand{\theoremname}{Theorem}

\begin{document}

\numberwithin{equation}{section}
\newtheorem{theorem}{Theorem}[section]

\title{Legendre Polynomials and Their Use for Karhunen--Loève Expansion}

\author{Michal Béreš}
\address{Institute of Geonics of the CAS, Ostrava, Czech Republic}
\address{Department of Applied Mathematics, Faculty of Electrical Engineering and Computer Science,
V\v{S}B – Technical University of Ostrava, Ostrava, Czech Republic}
\email{michal.beres@vsb.cz}

\date{\today}

\dedicatory{Dedicated to the memory of Prof. Radim Blaheta --- an inspiring teacher, generous mentor, and truly kind person.}

\begin{abstract}
This paper makes two main contributions. First, we present a pedagogical review of the derivation of the three-term recurrence relation for Legendre polynomials, without relying on the classical Legendre differential equation, Rodrigues' formula, or generating functions. This exposition is designed to be accessible to undergraduate students.

Second, we develop a computational framework for Karhunen--Loève expansions of isotropic Gaussian random fields on hyper-rectangular domains. The framework leverages Legendre polynomials and their associated Gaussian quadrature, and it remains efficient even in higher spatial dimensions.

A covariance kernel is first approximated by a non-negative mixture of squared-exponentials, obtained via a Newton-optimized fit with a theoretically informed initialization. The resulting separable kernel enables a Legendre--Galerkin discretization in the form of a Kronecker product over single dimensions, with submatrices that exhibit even/odd parity structure. For assembly, we introduce a Duffy-type transformation followed by quadrature. These structural properties significantly reduce both memory usage and arithmetic cost compared to naive approaches. All algorithms and numerical experiments are provided in an open-source repository that reproduces every figure and table in this work.
\end{abstract}

\keywords{Karhunen--Loève expansion, Legendre polynomials, squared-exponential approximation}
\subjclass[2020]{65C05, 86-08, 82-08, 65C60, 60-08}

\thanks{This work was supported by the European Union under Grant Agreement no.~101166718 (EURAD-2 project).\\
This article was co-funded by the European Union under the REFRESH -- Research Excellence For REgion Sustainability and High-tech Industries project number \url{CZ.10.03.01/00/22_003/0000048} via the Operational Programme Just Transition.}

\maketitle

\section{Introduction}

\label{sec:introduction}

\subsection{Motivation}

The rapid growth of computational power has made it feasible to solve
large systems of partial differential equations (PDEs) with uncertain
input data. A prominent example arises in geoscience, where the hydraulic
conductivity of an aquifer is often sparsely measured and must therefore
be described statistically \cite{Freeze1975,Christakos1992}. Such
situations call for realistic \emph{Gaussian random fields} (GRFs),
whose samples or low-rank representations can be incorporated into
a downstream PDE solver. Among the available dimension-reduction tools,
the Karhunen--Loève (KL) expansion is preferred because it minimizes
the mean-square truncation error for any fixed number of terms \cite{Loeve1978,GhanemSpanos1991}.
Unfortunately, a straightforward Galerkin discretization of the KL
integral operator yields dense matrices, whose assembly and eigendecomposition
become computationally prohibitive in more than two spatial dimensions. 

\subsection{State of the Art}

Polynomial--Galerkin solvers for the Fredholm eigenproblem underlying
the Karhunen--Loève (KL) expansion have steadily advanced. Early
convergence analyses for low-order bases \cite{Huang2001} and benchmark
comparisons of Nyström, collocation, and finite element methods \cite{Betz2014}
identified dense matrix assembly as the primary bottleneck. Exponential
accuracy with Legendre spectral elements \cite{Oliveira2014}, multilevel
correction strategies \cite{XieZhou2015}, sparsity-oriented Chebyshev
and Haar variants \cite{LiuZhang2017,Azevedo2017}, and isogeometric
formulations enabling matrix-free tensor contractions \cite{Rahman2018,Mika2021}
have all reduced computational costs in specific settings. A discontinuous
Legendre approach further exploits block-diagonal mass matrices \cite{Basmaji2022}.
Yet, to the best of our knowledge, no existing method simultaneously
leverages 
\begin{enumerate}
\item[(i)] the separability introduced by non-negative squared-exponential fits
of isotropic covariances, 
\item[(ii)] the Kronecker structure arising from separability, hyper-rectangular
domains, and tensor-product bases, 
\item[(iii)] and the even/odd parity symmetry of Legendre modes, which leads to
block-diagonal matrices. 
\end{enumerate}
The framework developed in this paper unifies these three ingredients:
a squared-exponential kernel approximation feeds a tensor-product
Legendre--Galerkin discretisation; parity splits the spectrum into
independent blocks; and each block is stored and applied as a Kronecker
product. This synergy reduces memory and arithmetic requirements by
orders of magnitude while preserving the KL expansion’s optimal mean-square
truncation error. 

\subsection{Contributions}

This paper addresses the bottleneck caused by dense matrices arising
from Galerkin discretisation and aims to keep the mathematics accessible
to a broad audience. It makes two contributions: 
\begin{enumerate}
\item \emph{Elementary derivation of Legendre polynomials.} In Section~\ref{sec:derivation},
we re-derive the three-term recurrence relation for Legendre orthogonal
polynomials without invoking the Legendre differential equation, Rodrigues'
formula, or generating functions. The exposition relies solely on
elementary linear algebra and mathematical analysis, making it suitable
for undergraduates. 
\item \emph{Efficient KL framework for isotropic GRFs.} Sections~\ref{sec:Karhunen-Loeve-expansion-of}--\ref{sec:Numerical-approximation-of}
develop a pipeline for KL expansions on $D$-dimensional hyper-rectangles.
The key ingredients are: 
\begin{itemize}
\item a covariance kernel fitted by a \textbf{non-negative} mixture of squared-exponentials
(Section~\ref{sec:gauss_sum}), obtained via a Newton solver with
an initial guess guided by a Gaussian quadrature interpretation; 
\item a Legendre--Galerkin discretisation that exploits the separability
of the fitted kernel to form a Kronecker sum of one-dimensional blocks
(Section~\ref{sec:galerkin}); 
\item a tensor-structured assembly whose submatrices decompose into even/odd
parity blocks and are evaluated stably using a Duffy-type quadrature
(Sections~\ref{subsec:tensor-assembly}--\ref{subsec:duffy-quadrature}). 
\end{itemize}
Together, these components significantly reduce both the memory footprint
and the computational cost of computing the KL decomposition. 
\end{enumerate}
This paper builds upon and extends our earlier studies \cite{Beres2018,Beres2023},
which demonstrated the suitability of a tensor-product Legendre--Galerkin
discretisation for the KL decomposition of higher-dimensional fields.
Compared to previous work, we enhance all aspects of the computational
pipeline and provide theoretical references to support the foundations
of the numerical approximation. 

\subsection{Reproducibility}

All code implementing the functionality described in this paper, as
well as all scripts required to reproduce the figures, tables, and
numerical experiments, are available in a public repository at \url{https://github.com/Beremi/KL_decomposition}.

\section{Derivation of Legendre Polynomials and Corresponding Quadrature Rule}

\label{sec:derivation}

This section introduces the basic theory of orthogonal polynomials
and re-derives the classical \emph{three-term recurrence relation}
for Legendre polynomials. The approach taken---particularly in deriving
polynomial norms---prioritizes accessibility, making it potentially
suitable for undergraduate students by relying primarily on recurrence
relations, rather than more advanced techniques such as the Legendre
differential equation, Rodrigues' formula, or generating functions
(see, e.g.,~\cite{Szego1975}).

We begin with the three-term recurrence for monic polynomials, and
then use it to derive the recurrence for normalized polynomials in
Subsection~\ref{subsec:Three-term-recurrence-relation-normalized}
and for classical Legendre polynomials in Subsection~\ref{subsec:Three-term-recurrence-relation-classical}.

In Subsection~\ref{subsec:GW_quadrature}, we briefly recall the
connection between the Gauss quadrature rule and orthogonal polynomials
in the context of efficient quadrature computation, as introduced
in~\cite{GolubWelsch1969}. The recurrence relation---specifically
for normalized polynomials---as well as the quadrature rule, will
be used later for approximating the KL decomposition. 

\subsection{Preliminaries}
\begin{defn}[Positive measure \cite{Szego1975,Chihara1978}]
\label{def:pos_def_measure} Let $d\lambda(t)$ be a measure on an
interval $[a,b]$ (where $a\in\mathbb{R}\cup\{-\infty\}$, $b\in\mathbb{R}\cup\{\infty\}$,
and $a<b$), such that the bilinear form 
\[
\langle f,g\rangle_{d\lambda}=\int_{a}^{b}f(t)g(t)\,d\lambda(t)
\]
defines an inner product on the space of real polynomials. Such a
measure is called \emph{positive}. 
\end{defn}

For a positive measure, the moments $\mu_{k}=\int_{a}^{b}t^{k}\,d\lambda(t)$
exist for all $k\ge0$. 
\begin{defn}[Monic polynomials]
\label{def:monic} A polynomial $p(t)$ of degree $k$ is called
\emph{monic} if the coefficient of its highest-degree term ($t^{k}$)
is equal to $1$. That is, 
\[
p(t)=t^{k}+c_{k-1}t^{k-1}+\dots+c_{1}t+c_{0}.
\]
We will denote a monic polynomial of degree $k$ by $P_{k}^{\text{M}}$.
\end{defn}

\subsection{Three-Term Recurrence of Legendre Polynomials}

A fundamental property of orthogonal polynomials generated by a positive
measure and the corresponding integral inner product is that they
satisfy a three-term recurrence relation. 
\begin{thm}[\cite{Szego1975,Chihara1978,Gautschi2004}]
\label{thm:monic_recurrence} Let $\pi_{k}(t)$, $k=0,1,2,\ldots$,
be the sequence of monic orthogonal polynomials with respect to the
positive measure $d\lambda$ on $[a,b]$. Then, they satisfy the recurrence
relation: 
\begin{gather*}
\pi_{k+1}(t)=(t-\alpha_{k})\pi_{k}(t)-\beta_{k}\pi_{k-1}(t),\quad k=0,1,2,\ldots,\\
\pi_{-1}(t)\equiv0,\quad\pi_{0}(t)\equiv1,
\end{gather*}
where the coefficients are given by 
\begin{align*}
\alpha_{k} & =\frac{\langle t\pi_{k},\pi_{k}\rangle_{d\lambda}}{\langle\pi_{k},\pi_{k}\rangle_{d\lambda}}=\frac{\langle t\pi_{k},\pi_{k}\rangle_{d\lambda}}{\Vert\pi_{k}\Vert_{d\lambda}^{2}},\quad k=0,1,2,\ldots,\\
\beta_{0} & =0,\\
\beta_{k} & =\frac{\langle\pi_{k},\pi_{k}\rangle_{d\lambda}}{\langle\pi_{k-1},\pi_{k-1}\rangle_{d\lambda}}=\frac{\Vert\pi_{k}\Vert_{d\lambda}^{2}}{\Vert\pi_{k-1}\Vert_{d\lambda}^{2}},\quad k=1,2,\ldots.
\end{align*}
\end{thm}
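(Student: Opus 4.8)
The plan is to obtain $\pi_{k+1}$ by expanding $t\pi_k(t)$ in the orthogonal basis $\{\pi_0,\dots,\pi_{k+1}\}$ and proving that all but three of the coefficients vanish. First I would record the elementary facts that make this work. Because each $\pi_j$ has exact degree $j$, the set $\{\pi_0,\dots,\pi_m\}$ spans the space of polynomials of degree at most $m$, and distinct members are orthogonal, hence linearly independent; moreover $\Vert\pi_j\Vert_{d\lambda}^{2}>0$ for every $j$ since $d\lambda$ is positive (Definition~\ref{def:pos_def_measure}), so divisions by these norms are legitimate. I would also note the self-adjointness of multiplication by $t$, namely $\langle tf,g\rangle_{d\lambda}=\langle f,tg\rangle_{d\lambda}$, which is immediate from the integral definition of the inner product, and the trivial fact $\pi_0\equiv1$ (monic of degree $0$).

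Next, since $t\pi_k(t)$ is monic of degree $k+1$, the polynomial $t\pi_k-\pi_{k+1}$ has degree at most $k$ and therefore admits a unique expansion $t\pi_k-\pi_{k+1}=\sum_{j=0}^{k}c_j\pi_j$. Pairing both sides with $\pi_i$ for $0\le i\le k$ and using orthogonality (the term $\langle\pi_{k+1},\pi_i\rangle_{d\lambda}$ drops out) yields $c_i=\langle t\pi_k,\pi_i\rangle_{d\lambda}/\Vert\pi_i\Vert_{d\lambda}^{2}$. Now I would invoke self-adjointness: $\langle t\pi_k,\pi_i\rangle_{d\lambda}=\langle\pi_k,t\pi_i\rangle_{d\lambda}$, and observe that for $i\le k-2$ the polynomial $t\pi_i$ has degree at most $k-1$, so it lies in the span of $\pi_0,\dots,\pi_{k-1}$ and is orthogonal to $\pi_k$. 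Hence $c_i=0$ for all $i\le k-2$, and only $c_{k-1}$ and $c_k$ remain. Writing $\alpha_k:=c_k$ and $\beta_k:=c_{k-1}$ gives $\pi_{k+1}=(t-\alpha_k)\pi_k-\beta_k\pi_{k-1}$. For $k=0$ the $\pi_{k-1}$ term is simply absent, so adopting the conventions $\pi_{-1}\equiv0$ and $\beta_0:=0$ makes the formula uniform across all $k$.

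It then remains to identify the coefficients with the stated quotients. The expression $\alpha_k=\langle t\pi_k,\pi_k\rangle_{d\lambda}/\Vert\pi_k\Vert_{d\lambda}^{2}$ is exactly the formula for $c_k$. For $\beta_k$ with $k\ge1$ I would once more use self-adjointness to rewrite $\langle t\pi_k,\pi_{k-1}\rangle_{d\lambda}=\langle\pi_k,t\pi_{k-1}\rangle_{d\lambda}$, and note that $t\pi_{k-1}$ is monic of degree $k$, so $t\pi_{k-1}=\pi_k+r$ with $\deg r\le k-1$; since $r\perp\pi_k$, this gives $\langle t\pi_k,\pi_{k-1}\rangle_{d\lambda}=\Vert\pi_k\Vert_{d\lambda}^{2}$, and therefore $\beta_k=c_{k-1}=\Vert\pi_k\Vert_{d\lambda}^{2}/\Vert\pi_{k-1}\Vert_{d\lambda}^{2}$, as claimed.

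The only genuinely delicate point is the combination of self-adjointness with the degree count that collapses the expansion to three terms; once that observation is made, the rest is bookkeeping with norms. I expect that step to be the main conceptual hurdle for the intended undergraduate readership, but it uses nothing beyond the integral inner product and elementary degree arguments, consistent with the section's aim of avoiding the Legendre differential equation, Rodrigues' formula, and generating functions.
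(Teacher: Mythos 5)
Your proof is correct: the expansion of $t\pi_k-\pi_{k+1}$ in the orthogonal basis, the self-adjointness of multiplication by $t$ combined with the degree count to kill all coefficients with index $\le k-2$, and the identification $\langle\pi_k,t\pi_{k-1}\rangle_{d\lambda}=\Vert\pi_k\Vert_{d\lambda}^{2}$ give exactly the stated $\alpha_k$ and $\beta_k$. Note that the paper itself does not prove this theorem but cites \cite{Szego1975,Chihara1978,Gautschi2004}; your argument is precisely the classical proof found in those references, and your final step for $\beta_k$ is the same observation the paper isolates later as Lemma~\ref{lem:pi_k_P_k}, so it fits the paper's elementary toolkit seamlessly.
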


We now restrict our attention to Legendre polynomials, which are orthogonal
with respect to the measure $d\lambda(t)=dt$ on the interval $[-1,1]$.
The corresponding inner product is given by: 
\[
\langle f,g\rangle=\int_{-1}^{1}f(t)g(t)\,dt.
\]

\begin{thm}
\label{thm:legendre_alpha} For the monic Legendre polynomials $\{\pi_{k}(t)\}_{k=0}^{\infty}$,
orthogonal with respect to the inner product $\langle f,g\rangle=\int_{-1}^{1}f(t)g(t)\,dt$,
the recurrence coefficients $\alpha_{k}$ are all zero: 
\[
\alpha_{k}=\frac{\langle t\pi_{k},\pi_{k}\rangle}{\Vert\pi_{k}\Vert^{2}}=0,\quad k=0,1,2,\ldots.
\]
Furthermore, the monic Legendre polynomials exhibit alternating parity:
$\pi_{k}(-t)=(-1)^{k}\pi_{k}(t)$. Specifically, $\pi_{k}(t)$ is
an even function if $k$ is even, and an odd function if $k$ is odd.
\end{thm}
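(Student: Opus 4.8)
The plan is to establish the parity statement first, by induction on $k$, and then read off $\alpha_k=0$ as an immediate corollary. The crucial elementary observation is this: if a polynomial $\pi$ has definite parity (even or odd), then $\pi^2$ is even, so $t\,\pi(t)^2$ is odd; since $[-1,1]$ is symmetric about the origin, $\int_{-1}^{1}t\,\pi(t)^2\,dt=0$. Because the measure $d\lambda(t)=dt$ is positive (Definition~\ref{def:pos_def_measure}), we have $\Vert\pi_k\Vert^2>0$, so $\alpha_k$ is well-defined, and the vanishing of its numerator forces $\alpha_k=0$. Hence the whole theorem reduces to showing that each monic Legendre polynomial satisfies $\pi_k(-t)=(-1)^k\pi_k(t)$.

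For the induction, the base cases are $\pi_{-1}\equiv0$ (vacuously of any parity) and $\pi_0\equiv1$, which is even, matching $(-1)^0$. For the inductive step, assume $\pi_{k-1}$ has parity $(-1)^{k-1}$ and $\pi_k$ has parity $(-1)^k$. Since $\pi_k$ then has definite parity, the observation above gives $\alpha_k=0$, so Theorem~\ref{thm:monic_recurrence} specializes to $\pi_{k+1}(t)=t\,\pi_k(t)-\beta_k\pi_{k-1}(t)$. Now $t\,\pi_k(t)$ has parity $(-1)\cdot(-1)^k=(-1)^{k+1}$, while $\pi_{k-1}(t)$ has parity $(-1)^{k-1}=(-1)^{k+1}$; a linear combination of two functions sharing a parity has that parity, so $\pi_{k+1}$ has parity $(-1)^{k+1}$, closing the induction. (The $k=0$ instance of this step recovers $\pi_1(t)=t$.) With parity established for all $k$, applying the observation once more to each $\pi_k$ gives $\alpha_k=0$ for every $k\ge0$.

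I do not expect a genuine obstacle here; the one point demanding care is the ordering of the induction. One cannot simply assert $\alpha_k=0$ for all $k$ at the outset and then substitute into the recurrence, because the vanishing of $\alpha_k$ is itself a consequence of $\pi_k$ having definite parity. The induction must therefore be arranged so that the parity of $\pi_k$ is already in hand before $\alpha_k=0$ is used in the formula producing $\pi_{k+1}$, which is precisely how the step above is sequenced. It is also worth noting that the value of $\beta_k$ plays no role in the parity bookkeeping, since $\pi_{k-1}$ already carries the required parity $(-1)^{k+1}$; only the positivity of the measure (to guarantee $\Vert\pi_k\Vert^2\neq0$) is needed beyond the recurrence itself.
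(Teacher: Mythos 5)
Your proof is correct and follows essentially the same route as the paper: a coupled induction in which the three-term recurrence (with $\alpha_{k}=0$) propagates the parity $\pi_{k}(-t)=(-1)^{k}\pi_{k}(t)$, and the oddness of $t\,\pi_{k}(t)^{2}$ over the symmetric interval $[-1,1]$, together with $\Vert\pi_{k}\Vert^{2}>0$, forces $\alpha_{k}=0$. Your slight reordering (parity first within each step, $\alpha_{k}=0$ as its corollary, with base cases $\pi_{-1}\equiv0$, $\pi_{0}\equiv1$ instead of the paper's explicit integrals for $k=0,1$) is only a cosmetic variation on the paper's simultaneous induction.
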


\begin{proof}
We prove that $\alpha_{k}=0$ and the parity property simultaneously
by induction. 
\begin{itemize}
\item \textbf{$k=0$:} $\pi_{0}(t)=1$, which is an even function. 
\[
\langle t\pi_{0},\pi_{0}\rangle=\int_{-1}^{1}t\,dt=\left[\frac{t^{2}}{2}\right]_{-1}^{1}=\frac{1}{2}-\frac{1}{2}=0.
\]
Since the numerator is zero and $\Vert\pi_{0}\Vert^{2}=\int_{-1}^{1}1^{2}\,dt=2>0$,
we have $\alpha_{0}=0$.
\item \textbf{$k=1$:} Using Theorem~\ref{thm:monic_recurrence} with $k=0$:
\[
\pi_{1}(t)=(t-\alpha_{0})\pi_{0}(t)-\beta_{0}\pi_{-1}(t)=(t-0)(1)-0=t.
\]
Thus, $\pi_{1}(t)=t$ is an odd function. Now compute $\alpha_{1}$:
\[
\langle t\pi_{1},\pi_{1}\rangle=\int_{-1}^{1}t^{3}\,dt=\left[\frac{t^{4}}{4}\right]_{-1}^{1}=\frac{1}{4}-\frac{1}{4}=0.
\]
Since $\Vert\pi_{1}\Vert^{2}=\int_{-1}^{1}t^{2}\,dt=\frac{2}{3}>0$,
we have $\alpha_{1}=0$.
\item \textbf{Inductive step:} Assume $\alpha_{i}=0$ and $\pi_{i}(-t)=(-1)^{i}\pi_{i}(t)$
for all $0\le i\le k$. We need to show $\alpha_{k+1}=0$ and that
$\pi_{k+1}(-t)=(-1)^{k+1}\pi_{k+1}(t)$.

From the recurrence (using $\alpha_{k}=0$), 
\[
\pi_{k+1}(t)=t\pi_{k}(t)-\beta_{k}\pi_{k-1}(t),\quad k\ge1.
\]
Applying the inductive hypothesis: 
\begin{align*}
\pi_{k+1}(-t) & =(-t)\pi_{k}(-t)-\beta_{k}\pi_{k-1}(-t)\\
 & =(-t)(-1)^{k}\pi_{k}(t)-\beta_{k}(-1)^{k-1}\pi_{k-1}(t)\\
 & =(-1)^{k+1}\left(t\pi_{k}(t)-\beta_{k}\pi_{k-1}(t)\right)=(-1)^{k+1}\pi_{k+1}(t).
\end{align*}

To compute $\alpha_{k+1}$, consider: 
\[
\langle t\pi_{k+1},\pi_{k+1}\rangle=\int_{-1}^{1}t[\pi_{k+1}(t)]^{2}\,dt.
\]
The integrand $t[\pi_{k+1}(t)]^{2}$ is odd: $t$ is odd, $[\pi_{k+1}(t)]^{2}$
is even. Thus, the integral over the symmetric interval $[-1,1]$
vanishes: 
\[
\int_{-1}^{1}t[\pi_{k+1}(t)]^{2}\,dt=0.
\]
Since $\Vert\pi_{k+1}\Vert^{2}>0$ (as $\pi_{k+1}(t)\neq0$ for all
$t\in[-1,1]$), we conclude that $\alpha_{k+1}=0$.

\end{itemize}
\end{proof}
\begin{lem}
\label{lem:a_k_lemma}Given $a_{1}=1$ and the recurrence relation
for $k\ge2$: 
\[
a_{k}=1-\frac{2k-3}{2k-1}a_{k-1},
\]
the explicit formula for $a_{k}$ is 
\[
a_{k}=\frac{k}{2k-1},\quad\text{for }k\ge1.
\]
\end{lem}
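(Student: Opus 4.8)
The plan is to prove the closed form $a_k = \frac{k}{2k-1}$ by induction on $k$. This is a routine verification, so the work will be entirely in setting up the base case and the algebraic simplification of the inductive step.

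First I would check the base case $k=1$: by hypothesis $a_1 = 1$, and the proposed formula gives $\frac{1}{2\cdot 1 - 1} = \frac{1}{1} = 1$, so the two agree.

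For the inductive step, I would assume the formula holds at $k-1$, i.e.\ $a_{k-1} = \frac{k-1}{2(k-1)-1} = \frac{k-1}{2k-3}$ for some $k \ge 2$, and substitute this into the given recurrence:
\[
a_k = 1 - \frac{2k-3}{2k-1}\cdot a_{k-1} = 1 - \frac{2k-3}{2k-1}\cdot\frac{k-1}{2k-3} = 1 - \frac{k-1}{2k-1}.
\]
The factor $2k-3$ cancels cleanly (note $2k-3 \ne 0$ for $k \ge 2$), and combining over the common denominator $2k-1$ gives $a_k = \frac{(2k-1)-(k-1)}{2k-1} = \frac{k}{2k-1}$, which is exactly the claimed formula at index $k$. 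By induction the formula holds for all $k \ge 1$.

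There is no real obstacle here; the only thing to be mindful of is the indexing bookkeeping — the recurrence is only stated for $k \ge 2$, so the induction must start from the explicitly given value $a_1 = 1$ rather than attempting to apply the recurrence at $k=1$, where the coefficient $\frac{2k-3}{2k-1} = \frac{-1}{1}$ would be meaningless in the intended context.
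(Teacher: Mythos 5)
Your proof is correct and follows exactly the same route as the paper: induction on $k$ with base case $a_1=1$ and the inductive substitution $a_{k-1}=\frac{k-1}{2k-3}$ into the recurrence, cancelling the factor $2k-3$ and combining over the denominator $2k-1$. Nothing to add.
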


\begin{proof}
We prove this by induction. 
\begin{itemize}
\item \textbf{Base case ($k=1$):} 
\[
a_{1}=\frac{1}{2(1)-1}=\frac{1}{1}=1.
\]
\item \textbf{Inductive step:} Assume $a_{k-1}=\frac{k-1}{2(k-1)-1}=\frac{k-1}{2k-3}$
holds for some $k\ge2$. Substituting into the recurrence relation:
\begin{align*}
a_{k} & =1-\frac{2k-3}{2k-1}a_{k-1}=1-\frac{2k-3}{2k-1}\left(\frac{k-1}{2k-3}\right)=1-\frac{k-1}{2k-1}\\
 & =\frac{2k-1-(k-1)}{2k-1}=\frac{2k-1-k+1}{2k-1}=\frac{k}{2k-1}.
\end{align*}
\end{itemize}
\end{proof}
\begin{lem}
\label{lem:pi_k_P_k} Let $\{\pi_{k}(t)\}_{k=0}^{\infty}$ be a sequence
of monic orthogonal polynomials with respect to an inner product $\langle\cdot,\cdot\rangle_{d\lambda}$
(with corresponding norm $\Vert\cdot\Vert_{d\lambda}$). For any $k\ge0$,
if $P_{k}^{\text{M}}(t)$ is any monic polynomial of degree $k$,
then 
\[
\langle\pi_{k},P_{k}^{\text{M}}\rangle_{d\lambda}=\Vert\pi_{k}\Vert_{d\lambda}^{2}.
\]
\end{lem}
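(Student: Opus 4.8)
The plan is to exploit the defining feature of monic polynomials: two monic polynomials of the same degree differ by a polynomial of strictly lower degree. Concretely, I would set $q = P_k^{\text{M}} - \pi_k$. Since both $P_k^{\text{M}}$ and $\pi_k$ are monic of degree $k$, the $t^k$ terms cancel, so $\deg q \le k-1$ (with the convention that the zero polynomial is included). The goal then reduces to showing $\langle \pi_k, q\rangle_{d\lambda} = 0$, because then
\[
\langle \pi_k, P_k^{\text{M}}\rangle_{d\lambda} = \langle \pi_k, \pi_k + q\rangle_{d\lambda} = \langle \pi_k,\pi_k\rangle_{d\lambda} + \langle \pi_k, q\rangle_{d\lambda} = \Vert\pi_k\Vert_{d\lambda}^2 + 0.
\]

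The key supporting fact is that $\{\pi_0, \pi_1, \ldots, \pi_{k-1}\}$ is a basis of the space of polynomials of degree at most $k-1$. This holds because these polynomials have pairwise distinct degrees $0, 1, \ldots, k-1$ and are nonzero, hence linearly independent; being $k$ of them inside the $k$-dimensional space $\mathcal{P}_{k-1}$, they span it. (If one prefers, this is the standard Gram--Schmidt picture: the monic orthogonal family is obtained by orthogonalizing $1, t, t^2, \ldots$, so $\mathrm{span}\{\pi_0,\dots,\pi_{k-1}\} = \mathrm{span}\{1,t,\dots,t^{k-1}\}$.) Consequently $q$ can be written as a linear combination $q = \sum_{j=0}^{k-1} c_j \pi_j$, and by orthogonality of $\pi_k$ to each $\pi_j$ with $j < k$ we get $\langle \pi_k, q\rangle_{d\lambda} = \sum_{j=0}^{k-1} c_j \langle \pi_k, \pi_j\rangle_{d\lambda} = 0$.

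There is essentially no hard step here; the only point requiring a line of care is justifying that $\pi_0,\dots,\pi_{k-1}$ span $\mathcal{P}_{k-1}$ rather than merely being orthogonal, and handling the degenerate case $q \equiv 0$ (which is trivially orthogonal to everything, and for which the claimed identity is immediate since then $P_k^{\text{M}} = \pi_k$). I would also remark that positivity of the measure (Definition~\ref{def:pos_def_measure}) is what guarantees $\Vert\pi_k\Vert_{d\lambda}^2 > 0$, so the statement is non-vacuous, though it is not strictly needed for the equality itself.
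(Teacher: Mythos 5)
Your argument is correct and is essentially the paper's own proof: writing $q=P_{k}^{\text{M}}-\pi_{k}=\sum_{j=0}^{k-1}c_{j}\pi_{j}$ and invoking orthogonality is the same decomposition the paper uses, with your version merely adding the (welcome but routine) justification that $\pi_{0},\dots,\pi_{k-1}$ span the polynomials of degree at most $k-1$.
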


\begin{proof}
Let $P_{k}^{\text{M}}(t)$ be an arbitrary monic polynomial of degree
$k$. We can express $P_{k}^{\text{M}}(t)$ in the orthogonal basis
$\{\pi_{k}(t)\}_{k=0}^{\infty}$ as 
\[
P_{k}^{\text{M}}(t)=\pi_{k}(t)+\sum_{j=0}^{k-1}c_{j}\pi_{j}(t).
\]
By linearity of the inner product, we have 
\[
\langle\pi_{k},P_{k}^{\text{M}}\rangle_{d\lambda}=\langle\pi_{k},\pi_{k}\rangle_{d\lambda}+\sum_{j=0}^{k-1}c_{j}\langle\pi_{k},\pi_{j}\rangle_{d\lambda}.
\]
Since the polynomials are orthogonal, $\langle\pi_{k},\pi_{j}\rangle_{d\lambda}=0$
for $k\ne j$. Thus, 
\[
\langle\pi_{k},P_{k}^{\text{M}}\rangle_{d\lambda}=\Vert\pi_{k}\Vert_{d\lambda}^{2}.
\]
\end{proof}
\begin{thm}[\cite{AbramowitzStegun1965,Szego1975}]
\label{thm:legendre_norm_value_revised} For the monic Legendre polynomials
$\pi_{k}(t)$, the following hold: 
\begin{enumerate}
\item \textbf{Relationship between norm and value at $t=1$:} 
\begin{equation}
\Vert\pi_{k}\Vert^{2}=\frac{2\pi_{k}^{2}(1)}{2k+1},\quad k\ge0.\label{eq:norm_vs_pi_k_1_revised}
\end{equation}
\item \textbf{Explicit value at $t=1$:} 
\begin{equation}
\pi_{k}(1)=\frac{2^{k}(k!)^{2}}{(2k)!},\quad k\ge0.\label{eq:pi_k_at_1_revised}
\end{equation}
\item \textbf{Explicit squared norm:} 
\begin{equation}
\Vert\pi_{k}\Vert^{2}=\frac{2^{2k+1}(k!)^{4}}{(2k+1)\,((2k)!)^{2}},\quad k\ge0.\label{eq:monic_legendre_norm_revised}
\end{equation}
\end{enumerate}
\end{thm}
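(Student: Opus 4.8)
Formula~\eqref{eq:monic_legendre_norm_revised} will follow immediately from \eqref{eq:norm_vs_pi_k_1_revised} and \eqref{eq:pi_k_at_1_revised}: substituting $\pi_k(1)=2^k(k!)^2/(2k)!$ into $\Vert\pi_k\Vert^2=2\pi_k(1)^2/(2k+1)$ and simplifying gives exactly $2^{2k+1}(k!)^4/\bigl((2k+1)((2k)!)^2\bigr)$. So I would concentrate on \eqref{eq:norm_vs_pi_k_1_revised} and \eqref{eq:pi_k_at_1_revised}, both of which will drop out of a single integration-by-parts identity,
\[
\Vert\pi_k\Vert^2=\frac{2\,\pi_k(1)\,\pi_{k+1}(1)}{k+1},\qquad k\ge 0 .
\]

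To establish this identity I would evaluate $\langle\pi_{k+1},\pi_k'\rangle$ in two ways. On the one hand $\pi_k'$ has degree $k-1$ and $\pi_{k+1}$ is orthogonal to every polynomial of degree $\le k$, so $\langle\pi_{k+1},\pi_k'\rangle=0$. On the other hand, integrating by parts, $\langle\pi_{k+1},\pi_k'\rangle=\bigl[\pi_{k+1}(t)\pi_k(t)\bigr]_{-1}^{1}-\langle\pi_{k+1}',\pi_k\rangle$. For the boundary term the parity relation $\pi_j(-1)=(-1)^j\pi_j(1)$ of Theorem~\ref{thm:legendre_alpha} makes the two endpoint contributions reinforce (the signs multiply to $(-1)^{2k+1}=-1$), so $\bigl[\pi_{k+1}\pi_k\bigr]_{-1}^{1}=2\pi_{k+1}(1)\pi_k(1)$; for the other term, $\tfrac{1}{k+1}\pi_{k+1}'$ is a monic polynomial of degree $k$, so Lemma~\ref{lem:pi_k_P_k} gives $\langle\pi_{k+1}',\pi_k\rangle=(k+1)\Vert\pi_k\Vert^2$. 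Equating $0=2\pi_{k+1}(1)\pi_k(1)-(k+1)\Vert\pi_k\Vert^2$ yields the identity; the degenerate case $k=0$ (where $\pi_0'\equiv 0$) is checked directly and reduces to $\Vert\pi_0\Vert^2=2$.

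With this identity available, I would evaluate the monic recurrence $\pi_{k+1}(t)=t\pi_k(t)-\beta_k\pi_{k-1}(t)$ at $t=1$ to obtain $\pi_{k+1}(1)=\pi_k(1)-\beta_k\pi_{k-1}(1)$, and separately use $\beta_k=\Vert\pi_k\Vert^2/\Vert\pi_{k-1}\Vert^2$ together with the identity at indices $k$ and $k-1$ to get $\beta_k=\dfrac{k\,\pi_{k+1}(1)}{(k+1)\,\pi_{k-1}(1)}$. Substituting the latter into the former, the factor $\pi_{k-1}(1)$ cancels and everything collapses to the clean first-order relation $\pi_{k+1}(1)=\dfrac{k+1}{2k+1}\,\pi_k(1)$. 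Starting from $\pi_0(1)=1$ and telescoping (the successive ratios $\pi_k(1)/\pi_{k-1}(1)$ being exactly the sequence $k/(2k-1)$ appearing in Lemma~\ref{lem:a_k_lemma}) yields $\pi_k(1)=\prod_{j=1}^{k}\frac{j}{2j-1}=\frac{k!}{(2k-1)!!}=\frac{2^k(k!)^2}{(2k)!}$, which is \eqref{eq:pi_k_at_1_revised}. Finally, feeding $\pi_{k+1}(1)=\frac{k+1}{2k+1}\pi_k(1)$ back into the identity collapses it to $\Vert\pi_k\Vert^2=\frac{2\pi_k(1)^2}{2k+1}$, that is, \eqref{eq:norm_vs_pi_k_1_revised}, and then \eqref{eq:monic_legendre_norm_revised} follows as noted above.

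I expect the main obstacle to be bookkeeping rather than anything conceptual: keeping the signs in the boundary term straight (the parities of $\pi_{k+1}$ and $\pi_k$ together give the doubling rather than cancellation), remembering to apply Lemma~\ref{lem:pi_k_P_k} to $\tfrac{1}{k+1}\pi_{k+1}'$ rather than to $\pi_{k+1}'$ itself, and handling $k=0$ separately. Beyond that the proof uses only orthogonality, one integration by parts, the parity of the $\pi_k$, the three-term recurrence, and a telescoping product --- and in particular never invokes the Legendre differential equation, Rodrigues' formula, or generating functions.
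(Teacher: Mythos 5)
Your argument is correct, but it takes a genuinely different route from the paper's. The paper integrates by parts directly on $\Vert\pi_k\Vert^2=\int_{-1}^1\pi_k^2\,dt$, writes $t\pi_k'(t)$ as $k$ times a monic polynomial of degree $k$, and thereby obtains \eqref{eq:norm_vs_pi_k_1_revised} on its own; it then feeds that relation into $\beta_{k-1}=\Vert\pi_{k-1}\Vert^2/\Vert\pi_{k-2}\Vert^2$ and the recurrence at $t=1$, arriving at the ratio recurrence $a_k=1-\tfrac{2k-3}{2k-1}a_{k-1}$, which it solves by the separate induction of Lemma~\ref{lem:a_k_lemma}. You instead evaluate the cross term $\langle\pi_{k+1},\pi_k'\rangle$ in two ways (orthogonality versus integration by parts, with the parity of Theorem~\ref{thm:legendre_alpha} doubling the boundary term and Lemma~\ref{lem:pi_k_P_k} applied to $\tfrac{1}{k+1}\pi_{k+1}'$), which gives the mixed identity
\[
\Vert\pi_k\Vert^2=\frac{2\,\pi_k(1)\,\pi_{k+1}(1)}{k+1},\qquad k\ge0,
\]
and from it both statements: combined with the recurrence at $t=1$ it collapses immediately to the first-order relation $\pi_{k+1}(1)=\tfrac{k+1}{2k+1}\pi_k(1)$, which telescopes to \eqref{eq:pi_k_at_1_revised} without any auxiliary induction, and substituting that relation back yields \eqref{eq:norm_vs_pi_k_1_revised}; part \eqref{eq:monic_legendre_norm_revised} then follows by substitution exactly as in the paper. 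The trade-off is clear: your version bypasses Lemma~\ref{lem:a_k_lemma} entirely and makes the telescoping transparent, at the cost of coupling the norm and the point value through the index pair $(k,k+1)$ (so \eqref{eq:norm_vs_pi_k_1_revised} is obtained only at the end), whereas the paper's version isolates \eqref{eq:norm_vs_pi_k_1_revised} first as a self-contained fact. One small point worth making explicit in your write-up: dividing by $\pi_{k-1}(1)$ when forming $\beta_k$ presupposes $\pi_{k-1}(1)\neq0$, which follows inductively from $\pi_0(1)=\pi_1(1)=1$ and your ratio relation (or from positivity of the norms in the mixed identity); the paper's own derivation makes the same implicit assumption, so this is a matter of polish rather than a gap.
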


\begin{proof}
We first derive the relationship \eqref{eq:norm_vs_pi_k_1_revised},
then find the explicit formula for $\pi_{k}(1)$ \eqref{eq:pi_k_at_1_revised},
and finally combine them to obtain the explicit norm \eqref{eq:monic_legendre_norm_revised}.

\textbf{Part 1: Relationship between $\Vert\pi_{k}\Vert^{2}$ and
$\pi_{k}(1)$.} We apply integration by parts to $\Vert\pi_{k}\Vert^{2}$:
\begin{equation}
\Vert\pi_{k}\Vert^{2}=\int_{-1}^{1}\pi_{k}(t)^{2}\,dt=\left[t\pi_{k}^{2}(t)\right]_{-1}^{1}-\int_{-1}^{1}t\cdot2\pi_{k}(t)\pi'_{k}(t)\,dt.\label{eq:per_partes}
\end{equation}
From Theorem~\ref{thm:legendre_alpha}, we know that $\pi_{k}(-t)=(-1)^{k}\pi_{k}(t)$,
implying $\pi_{k}^{2}(-1)=\pi_{k}^{2}(1)$. The boundary term evaluates
to: 
\[
\left[t\pi_{k}^{2}(t)\right]_{-1}^{1}=\pi_{k}^{2}(1)+\pi_{k}^{2}(1)=2\pi_{k}^{2}(1).
\]
If $k=0$, then 
\[
\int_{-1}^{1}t\cdot2\pi_{k}(t)\pi'_{k}(t)\,dt=0,
\]
so $\Vert\pi_{k}\Vert^{2}=2\pi_{k}^{2}(1)$. If $k>0$, then $t\pi'_{k}(t)=kP_{k}^{\text{M}}(t)$
for a monic polynomial $P_{k}^{\text{M}}(t)$ of degree $k$, since
the leading term of $\pi'_{k}$ is $kt^{k-1}$. By Lemma~\ref{lem:pi_k_P_k},
\[
\int_{-1}^{1}2t\pi_{k}(t)\pi'_{k}(t)\,dt=2k\int_{-1}^{1}\pi_{k}(t)P_{k}^{\text{M}}(t)\,dt=2k\Vert\pi_{k}\Vert^{2}.
\]
Substituting into \eqref{eq:per_partes}, we obtain: 
\begin{equation}
\Vert\pi_{k}\Vert^{2}=2\pi_{k}^{2}(1)-2k\Vert\pi_{k}\Vert^{2},\quad\forall k\ge0.\label{eq:norm_ibp_rel_revised}
\end{equation}
Solving gives: 
\[
\Vert\pi_{k}\Vert^{2}=\frac{2\pi_{k}^{2}(1)}{2k+1},\quad\forall k\ge0.
\]

\textbf{Part 2: Explicit formula for $\pi_{k}(1)$.} We know that
$\pi_{0}(t)=1$ and $\pi_{1}(t)=t$, so $\pi_{0}(1)=\pi_{1}(1)=1$.
From the recurrence relation at $t=1$: 
\[
\pi_{k}(1)=\pi_{k-1}(1)-\beta_{k-1}\pi_{k-2}(1),\quad\forall k\ge2,
\]
where $\beta_{k-1}=\frac{\Vert\pi_{k-1}\Vert^{2}}{\Vert\pi_{k-2}\Vert^{2}}$
(Theorem~\ref{thm:monic_recurrence}). Using the result from Part
1: 
\[
\Vert\pi_{j}\Vert^{2}=\frac{2\pi_{j}^{2}(1)}{2j+1}.
\]
We substitute this into the recurrence: 
\begin{align*}
\pi_{k}(1) & =\pi_{k-1}(1)-\frac{\frac{2\pi_{k-1}^{2}(1)}{2k-1}}{\frac{2\pi_{k-2}^{2}(1)}{2k-3}}\pi_{k-2}(1)=\pi_{k-1}(1)-\frac{2k-3}{2k-1}\cdot\frac{2\pi_{k-1}^{2}(1)}{2\pi_{k-2}(1)}.
\end{align*}
Letting $a_{k}=\pi_{k}(1)/\pi_{k-1}(1)$, we derive the recurrence:
\[
a_{k}=1-\frac{2k-3}{2k-1}a_{k-1},\quad k\ge2,
\]
with $a_{1}=1$. By Lemma~\ref{lem:a_k_lemma}, this gives: 
\[
a_{k}=\frac{k}{2k-1}.
\]
Hence, 
\[
\pi_{k}(1)=\prod_{i=1}^{k}a_{i}=\prod_{i=1}^{k}\frac{i}{2i-1}=\frac{k!}{1\cdot3\cdot5\cdots(2k-1)}.
\]
Using the identity 
\[
1\cdot3\cdot5\cdots(2k-1)=\frac{1\cdot2\cdot3\cdot4\cdot5\cdot6\cdots(2k)}{2\cdot4\cdot6\cdots(2k)}=\frac{(2k)!}{2^{k}k!},
\]
 we obtain: 
\[
\pi_{k}(1)=\frac{k!}{\frac{(2k)!}{2^{k}k!}}=\frac{2^{k}(k!)^{2}}{(2k)!},\quad k\ge0.
\]

\textbf{Part 3: Explicit squared norm.} Substituting \eqref{eq:pi_k_at_1_revised}
into \eqref{eq:norm_vs_pi_k_1_revised} gives: 
\[
\Vert\pi_{k}\Vert^{2}=\frac{2}{2k+1}\left(\frac{2^{k}(k!)^{2}}{(2k)!}\right)^{2}.
\]
Simplifying: 
\[
\Vert\pi_{k}\Vert^{2}=\frac{2}{2k+1}\cdot\frac{2^{2k}(k!)^{4}}{((2k)!)^{2}}=\frac{2^{2k+1}(k!)^{4}}{(2k+1)((2k)!)^{2}},\quad k\ge0.
\]
\end{proof}
\begin{thm}
\label{thm:legendre_beta} For the monic Legendre polynomials, the
recurrence coefficient $\beta_{k}$ for $k\ge1$ is given by 
\[
\beta_{k}=\frac{k^{2}}{4k^{2}-1}.
\]
\end{thm}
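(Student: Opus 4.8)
The plan is to use the formula for $\beta_k$ from Theorem~\ref{thm:monic_recurrence}, namely $\beta_k = \Vert\pi_k\Vert^2 / \Vert\pi_{k-1}\Vert^2$, and simply substitute the explicit squared-norm formula \eqref{eq:monic_legendre_norm_revised} established in Theorem~\ref{thm:legendre_norm_value_revised}. This reduces the proof to a routine algebraic simplification of a ratio of factorial expressions, so there is no substantive obstacle; the only care needed is bookkeeping with the shift $k \mapsto k-1$.

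Concretely, first I would write
\[
\beta_k = \frac{\Vert\pi_k\Vert^2}{\Vert\pi_{k-1}\Vert^2}
= \frac{\dfrac{2^{2k+1}(k!)^4}{(2k+1)\,((2k)!)^2}}{\dfrac{2^{2k-1}((k-1)!)^4}{(2k-1)\,((2k-2)!)^2}}.
\]
Then I would group the factors: the power of two contributes $2^{2k+1}/2^{2k-1} = 4$; the factorial ratio $(k!/(k-1)!)^4 = k^4$; the ratio $((2k-2)!/(2k)!)^2 = \bigl(1/((2k-1)(2k))\bigr)^2 = 1/(4k^2(2k-1)^2)$; and the remaining denominator factor $(2k-1)/(2k+1)$. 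Multiplying these together yields
\[
\beta_k = 4 \cdot k^4 \cdot \frac{1}{4k^2(2k-1)^2} \cdot \frac{2k-1}{2k+1}
= \frac{k^2}{(2k-1)(2k+1)} = \frac{k^2}{4k^2-1},
\]
which is the claimed identity.

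Since every ingredient is already proved in the excerpt, I expect no genuine difficulty; the step most prone to slip is the exponent arithmetic on the powers of two and correctly identifying $(2k-2)! \cdot (2k-1)(2k) = (2k)!$, but both are immediate once written out. As a sanity check one may verify $\beta_1 = 1/3$, consistent with $\Vert\pi_1\Vert^2/\Vert\pi_0\Vert^2 = (2/3)/2$.
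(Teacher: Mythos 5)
Your proposal is correct and follows exactly the paper's own argument: substitute the explicit squared norms from Theorem~\ref{thm:legendre_norm_value_revised} into $\beta_k = \Vert\pi_k\Vert^2/\Vert\pi_{k-1}\Vert^2$ and simplify the factorial ratio. The algebra, including the factor grouping and the check $\beta_1 = 1/3$, is accurate.
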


\begin{proof}
Starting from the definition in Theorem~\ref{thm:monic_recurrence}:
\[
\beta_{k}=\frac{\Vert\pi_{k}\Vert^{2}}{\Vert\pi_{k-1}\Vert^{2}},\quad k\ge1.
\]
Using the result from Theorem~\ref{thm:legendre_norm_value_revised}:
\[
\Vert\pi_{k}\Vert^{2}=\frac{2^{2k+1}(k!)^{4}}{(2k+1)((2k)!)^{2}},
\]
and for $k-1$: 
\[
\Vert\pi_{k-1}\Vert^{2}=\frac{2^{2k-1}((k-1)!)^{4}}{(2k-1)((2k-2)!)^{2}}.
\]
Now compute the ratio: 
\begin{align*}
\beta_{k} & =\frac{\Vert\pi_{k}\Vert^{2}}{\Vert\pi_{k-1}\Vert^{2}}=\frac{\frac{2^{2k+1}(k!)^{4}}{(2k+1)((2k)!)^{2}}}{\frac{2^{2k-1}((k-1)!)^{4}}{(2k-1)((2k-2)!)^{2}}}=\frac{2^{2k+1}(k!)^{4}}{(2k+1)((2k)!)^{2}}\cdot\frac{(2k-1)((2k-2)!)^{2}}{2^{2k-1}((k-1)!)^{4}}\\
 & =\frac{2^{2k+1}}{2^{2k-1}}\cdot\frac{(k!)^{4}}{((k-1)!)^{4}}\cdot\frac{2k-1}{2k+1}\cdot\frac{((2k-2)!)^{2}}{((2k)!)^{2}}\\
 & =4\cdot k^{4}\cdot\frac{2k-1}{2k+1}\cdot\frac{1}{(2k)^{2}(2k-1)^{2}}=\frac{k^{2}}{4k^{2}-1}.
\end{align*}
\end{proof}
\begin{cor}[Recurrence for Monic Legendre Polynomials]
\label{cor:The-three-term-recurrence-legendre}The three-term recurrence
relation for the monic Legendre polynomials $\pi_{k}(t)$ is 
\[
\pi_{k+1}(t)=t\pi_{k}(t)-\frac{k^{2}}{4k^{2}-1}\pi_{k-1}(t),\quad k=1,2,\ldots
\]
with initial conditions $\pi_{0}(t)=1$ and $\pi_{1}(t)=t$. 
\end{cor}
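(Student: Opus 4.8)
The plan is simply to assemble the pieces already established. Theorem~\ref{thm:monic_recurrence} provides, for any positive measure, the three-term recurrence $\pi_{k+1}(t) = (t-\alpha_k)\pi_k(t) - \beta_k\pi_{k-1}(t)$ for $k=0,1,2,\ldots$, together with $\pi_{-1}\equiv 0$, $\pi_0\equiv 1$, and $\beta_0 = 0$. Specializing to the Legendre measure $d\lambda(t)=dt$ on $[-1,1]$, Theorem~\ref{thm:legendre_alpha} gives $\alpha_k = 0$ for all $k\ge 0$, and Theorem~\ref{thm:legendre_beta} gives $\beta_k = k^2/(4k^2-1)$ for all $k\ge 1$. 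Substituting $\alpha_k = 0$ and this value of $\beta_k$ into the general recurrence, for each $k\ge 1$, produces exactly the claimed relation $\pi_{k+1}(t) = t\pi_k(t) - \frac{k^2}{4k^2-1}\pi_{k-1}(t)$.

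For the initial conditions, $\pi_0(t)=1$ is just the definition of the monic polynomial of degree zero (Definition~\ref{def:monic}). Evaluating the general recurrence of Theorem~\ref{thm:monic_recurrence} at $k=0$, using $\alpha_0 = 0$ and $\beta_0 = 0$, yields $\pi_1(t) = (t-0)\cdot 1 - 0\cdot\pi_{-1}(t) = t$; this computation already appears in the proof of Theorem~\ref{thm:legendre_alpha}. Hence both stated initial values follow.

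There is no genuine obstacle: the corollary is a direct consequence of the three preceding theorems. The only point worth a remark is the indexing. The general recurrence is valid for $k\ge 0$, whereas the formula for $\beta_k$ (and therefore the stated closed form of the recurrence) applies only for $k\ge 1$; the omitted $k=0$ instance is precisely what delivers the second initial condition $\pi_1(t)=t$, so the two displays together are equivalent to the full recurrence and nothing is lost.
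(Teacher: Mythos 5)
Your proposal is correct and matches the paper's (implicit) argument exactly: the corollary is obtained by substituting $\alpha_k=0$ from Theorem~\ref{thm:legendre_alpha} and $\beta_k=k^2/(4k^2-1)$ from Theorem~\ref{thm:legendre_beta} into the general recurrence of Theorem~\ref{thm:monic_recurrence}, with the initial values coming from the $k=0$ step as already computed in the proof of Theorem~\ref{thm:legendre_alpha}. Your remark on the indexing ($k\ge 1$ versus $k\ge 0$) is a nice touch but introduces nothing beyond the paper's route.
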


\subsubsection{Three-term recurrence relation for normalized Legendre polynomials}\label{subsec:Three-term-recurrence-relation-normalized}

Orthonormal polynomials $\tilde{\pi}_{k}(t)$ are obtained by normalizing
the monic orthogonal polynomials: 
\[
\tilde{\pi}_{k}(t)=\frac{\pi_{k}(t)}{\Vert\pi_{k}\Vert_{d\lambda}},\quad\text{so that }\langle\tilde{\pi}_{k},\tilde{\pi}_{j}\rangle_{d\lambda}=\delta_{kj}.
\]
They also satisfy a three-term recurrence relation, which can be derived
from the recurrence for the monic polynomials. 
\begin{thm}[Orthonormal Legendre polynomials]
\label{Orthonormal-Legendre-polynomials}The orthonormal Legendre
polynomials $\tilde{\pi}_{k}(t)$ satisfy the three-term recurrence
relation: 
\begin{equation}
\tilde{\pi}_{k+1}(t)=\sqrt{4-\frac{1}{(k+1)^{2}}}\left(t\tilde{\pi}_{k}(t)-\frac{1}{\sqrt{4-\frac{1}{k^{2}}}}\tilde{\pi}_{k-1}(t)\right),\quad k=1,2,\ldots\label{eq:orthonormal_legendre}
\end{equation}
with initial conditions $\tilde{\pi}_{0}(t)=1/\Vert\pi_{0}\Vert=1/\sqrt{2}$
and $\tilde{\pi}_{1}(t)=t/\Vert\pi_{1}\Vert=\sqrt{\frac{3}{2}}\,t$.
\end{thm}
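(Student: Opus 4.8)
The plan is to start from the monic three-term recurrence of Corollary~\ref{cor:The-three-term-recurrence-legendre}, namely $\pi_{k+1}(t)=t\pi_k(t)-\beta_k\pi_{k-1}(t)$ with $\beta_k=k^2/(4k^2-1)$, and simply rewrite it in terms of $\tilde\pi_k=\pi_k/\Vert\pi_k\Vert$. Substituting $\pi_j=\Vert\pi_j\Vert\,\tilde\pi_j$ for $j=k-1,k,k+1$ and dividing through by $\Vert\pi_{k+1}\Vert$ yields
\[
\tilde\pi_{k+1}(t)=\frac{\Vert\pi_k\Vert}{\Vert\pi_{k+1}\Vert}\,t\,\tilde\pi_k(t)-\beta_k\,\frac{\Vert\pi_{k-1}\Vert}{\Vert\pi_{k+1}\Vert}\,\tilde\pi_{k-1}(t),
\]
so the whole task reduces to expressing the two norm ratios through the $\beta$'s.

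The key identity is $\beta_k=\Vert\pi_k\Vert^2/\Vert\pi_{k-1}\Vert^2$ from Theorem~\ref{thm:monic_recurrence}, which gives $\Vert\pi_{k-1}\Vert/\Vert\pi_k\Vert=1/\sqrt{\beta_k}$ and, after an index shift, $\Vert\pi_k\Vert/\Vert\pi_{k+1}\Vert=1/\sqrt{\beta_{k+1}}$. Hence $\Vert\pi_{k-1}\Vert/\Vert\pi_{k+1}\Vert=1/(\sqrt{\beta_k}\,\sqrt{\beta_{k+1}})$, and the coefficient of $\tilde\pi_{k-1}$ becomes $\beta_k/(\sqrt{\beta_k}\,\sqrt{\beta_{k+1}})=\sqrt{\beta_k}/\sqrt{\beta_{k+1}}$. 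Factoring $1/\sqrt{\beta_{k+1}}$ out of both terms then produces
\[
\tilde\pi_{k+1}(t)=\frac{1}{\sqrt{\beta_{k+1}}}\Bigl(t\,\tilde\pi_k(t)-\sqrt{\beta_k}\,\tilde\pi_{k-1}(t)\Bigr).
\]

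It then remains to substitute $\beta_k=k^2/(4k^2-1)$: we get $1/\beta_{k+1}=(4(k+1)^2-1)/(k+1)^2=4-1/(k+1)^2$, so $1/\sqrt{\beta_{k+1}}=\sqrt{4-1/(k+1)^2}$, and likewise $\sqrt{\beta_k}=\sqrt{k^2/(4k^2-1)}=1/\sqrt{4-1/k^2}$, which turns the previous display into exactly~\eqref{eq:orthonormal_legendre}. For the initial data, $\Vert\pi_0\Vert^2=\int_{-1}^1 1\,dt=2$ gives $\tilde\pi_0=1/\sqrt2$ and $\Vert\pi_1\Vert^2=\int_{-1}^1 t^2\,dt=2/3$ gives $\tilde\pi_1(t)=t/\sqrt{2/3}=\sqrt{3/2}\,t$ (equivalently, these norms follow from Theorem~\ref{thm:legendre_norm_value_revised}). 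I do not expect a genuine obstacle: the argument is a direct algebraic rewriting, and the only point requiring care is the index bookkeeping in the norm ratios — using $\beta_{k+1}$, not $\beta_k$, for $\Vert\pi_k\Vert/\Vert\pi_{k+1}\Vert$ — together with the observation that the stated range $k\ge1$ is precisely where the monic recurrence and the expressions $\sqrt{4-1/k^2}$ make sense.
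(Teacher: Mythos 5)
Your proof is correct. Structurally it follows the same outline as the paper's proof — rewrite the monic recurrence of Corollary~\ref{cor:The-three-term-recurrence-legendre} in terms of $\tilde\pi_k=\pi_k/\Vert\pi_k\Vert$ and then identify the two norm ratios — but you evaluate those ratios differently. The paper substitutes the explicit norm formula $\Vert\pi_k\Vert^2=\frac{2^{2k+1}(k!)^4}{(2k+1)((2k)!)^2}$ from Theorem~\ref{thm:legendre_norm_value_revised} and simplifies the resulting factorial quotient to obtain $\Vert\pi_k\Vert^2/\Vert\pi_{k+1}\Vert^2=4-\frac{1}{(k+1)^2}$. You instead invoke the general identity $\beta_k=\Vert\pi_k\Vert^2/\Vert\pi_{k-1}\Vert^2$ from Theorem~\ref{thm:monic_recurrence}, so that $\Vert\pi_k\Vert/\Vert\pi_{k+1}\Vert=1/\sqrt{\beta_{k+1}}$ and the coefficient of $\tilde\pi_{k-1}$ collapses to $\sqrt{\beta_k}/\sqrt{\beta_{k+1}}$, and only then insert the already-proved value $\beta_k=k^2/(4k^2-1)$ from Theorem~\ref{thm:legendre_beta}. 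Your route is slightly cleaner and more general: the intermediate formula $\tilde\pi_{k+1}=\beta_{k+1}^{-1/2}\bigl(t\tilde\pi_k-\sqrt{\beta_k}\,\tilde\pi_{k-1}\bigr)$ is valid for any family of monic orthogonal polynomials with $\alpha_k=0$, and it avoids repeating the factorial arithmetic (which the paper in effect does twice, once in Theorem~\ref{thm:legendre_beta} and once here); the paper's version buys only self-containedness of the displayed ratio computation. Your index bookkeeping ($\beta_{k+1}$, not $\beta_k$, for $\Vert\pi_k\Vert/\Vert\pi_{k+1}\Vert$), the final substitutions $1/\sqrt{\beta_{k+1}}=\sqrt{4-1/(k+1)^2}$ and $\sqrt{\beta_k}=1/\sqrt{4-1/k^2}$, and the initial conditions from $\Vert\pi_0\Vert^2=2$, $\Vert\pi_1\Vert^2=2/3$ all check out, so there is no gap.
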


\begin{proof}
Let 
\[
a_{k}:=\Vert\pi_{k}\Vert\quad\Longrightarrow\quad\tilde{\pi}_{k}(t)=\frac{\pi_{k}(t)}{a_{k}},\qquad k\ge0.
\]
The monic Legendre relation (Corollary~\ref{cor:The-three-term-recurrence-legendre})
divided by $a_{k+1}$, reads
\begin{equation}
\tilde{\pi}_{k+1}(t)=\frac{a_{k}}{a_{k+1}}\left(t\,\tilde{\pi}_{k}(t)-\frac{k^{2}}{4k^{2}-1}\cdot\frac{a_{k-1}}{a_{k}}\tilde{\pi}_{k-1}(t)\right),\qquad k\ge1.\label{eq:tilde_start}
\end{equation}
From the explicit norm formula ${\displaystyle a_{k}^{\,2}=\frac{2^{2k+1}(k!)^{4}}{(2k+1)\,((2k)!)^{2}}}$
(Theorem~\ref{thm:legendre_norm_value_revised}) one finds 
\begin{align*}
\frac{a_{k}^{\,2}}{a_{k+1}^{\,2}} & =\frac{2^{2k+1}(k!)^{4}}{(2k+1)\,((2k)!)^{2}}\cdot\frac{(2k+3)\,((2k+2)!)^{2}}{2^{2k+3}((k+1)!)^{4}}=\frac{4\,(k+1)^{2}\,(2k+1)\,(2k+3)}{4(k+1)^{4}}\\
 & =\frac{4k^{2}+8k+3}{(k+1)^{2}}=4-\frac{1}{(k+1)^{2}},\\
\frac{a_{k-1}^{\,2}}{a_{k+1}^{\,2}} & =\frac{a_{k-1}^{2}}{a_{k}^{2}}\cdot\frac{a_{k}^{2}}{a_{k+1}^{2}}.
\end{align*}
Hence 
\begin{align*}
\frac{a_{k}}{a_{k+1}} & =\sqrt{4-\frac{1}{(k+1)^{2}}},\\
\frac{k^{2}}{4k^{2}-1}\frac{a_{k-1}}{a_{k}} & =\frac{k^{2}}{4k^{2}-1}\sqrt{4-\frac{1}{k^{2}}}=\frac{k^{2}}{4k^{2}-1}\sqrt{\frac{4k^{2}-1}{k^{2}}}=\frac{k\sqrt{4k^{2}-1}}{4k^{2}-1}=\frac{1}{\sqrt{4-\frac{1}{k^{2}}}}.
\end{align*}
Substituting the two ratios gives 
\[
\tilde{\pi}_{k+1}(t)=\sqrt{4-\frac{1}{(k+1)^{2}}}\,\biggl(t\,\tilde{\pi}_{k}(t)-\frac{1}{\sqrt{4-\frac{1}{k^{2}}}}\,\tilde{\pi}_{k-1}(t)\biggr),\qquad k\ge1,
\]
which is exactly \eqref{eq:orthonormal_legendre}. The initial values
follow from $\tilde{\pi}_{0}(t)=\pi_{0}(t)/a_{0}=1/\sqrt{2}$ and
$\tilde{\pi}_{1}(t)=t/a_{1}=t\sqrt{3/2}$. 
\end{proof}

\subsubsection{Three-term recurrence relation for classical Legendre polynomials}\label{subsec:Three-term-recurrence-relation-classical}

We define the \emph{classical} Legendre polynomials $P_{k}$ by 
\begin{equation}
d_{k}\,P_{k}(t)=\pi_{k}(t)\quad\Longrightarrow\quad P_{k}(1)=1,\qquad k\ge0,\label{eq:def_classical}
\end{equation}
where 
\begin{equation}
d_{k}:=\pi_{k}(1)=\frac{2^{k}(k!)^{2}}{(2k)!},\qquad k\ge0,\label{eq:d_k_pi_k}
\end{equation}
see Theorem~\ref{thm:legendre_norm_value_revised}.
\begin{thm}
For every $k\ge1$ the polynomials defined in~\eqref{eq:def_classical}
satisfy 
\[
(k+1)\,P_{k+1}(t)\;=\;(2k+1)\,t\,P_{k}(t)\;-\;k\,P_{k-1}(t),\qquad P_{0}(t)=1,\;P_{1}(t)=t.
\]
\end{thm}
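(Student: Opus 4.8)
The plan is to derive this recurrence directly from the monic recurrence of Corollary~\ref{cor:The-three-term-recurrence-legendre} by the rescaling $\pi_k(t)=d_k\,P_k(t)$ from~\eqref{eq:def_classical}, so that the only genuine work is bookkeeping of the normalization constants $d_k=\pi_k(1)=2^k(k!)^2/(2k)!$ supplied by Theorem~\ref{thm:legendre_norm_value_revised}.

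First I would substitute $\pi_j(t)=d_j P_j(t)$ for $j=k-1,k,k+1$ into $\pi_{k+1}(t)=t\pi_k(t)-\tfrac{k^2}{4k^2-1}\pi_{k-1}(t)$ and divide by $d_{k+1}$, obtaining
\[
P_{k+1}(t)=\frac{d_k}{d_{k+1}}\,t\,P_k(t)-\frac{k^2}{4k^2-1}\cdot\frac{d_{k-1}}{d_{k+1}}\,P_{k-1}(t),\qquad k\ge 1.
\]
Next I would evaluate the two ratios from the closed form of $d_k$: using $(2k+2)!/(2k)!=(2k+1)(2k+2)$ and $\big((k+1)!/k!\big)^2=(k+1)^2$, a short cancellation of factorials and a power of two gives $d_k/d_{k+1}=(2k+1)/(k+1)$; applying the same identity with $k$ replaced by $k-1$ yields $d_{k-1}/d_k=(2k-1)/k$, hence $d_{k-1}/d_{k+1}=(2k-1)(2k+1)/\big(k(k+1)\big)=(4k^2-1)/\big(k(k+1)\big)$. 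Substituting these, the factor $4k^2-1$ cancels in the last term and leaves $P_{k+1}(t)=\tfrac{2k+1}{k+1}\,t\,P_k(t)-\tfrac{k}{k+1}\,P_{k-1}(t)$; multiplying through by $k+1$ produces the claimed identity. The initial conditions are immediate, since $d_0=d_1=1$ gives $P_0(t)=\pi_0(t)/d_0=1$ and $P_1(t)=\pi_1(t)/d_1=t$.

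I do not expect any substantive obstacle: the statement is a routine consequence of the monic recurrence together with the explicit value $\pi_k(1)$. The only points requiring a little care are the factorial algebra in the ratios $d_k/d_{k+1}$ and $d_{k-1}/d_{k+1}$, and the observation that the monic recurrence holds precisely for $k\ge 1$ — exactly the range in which the classical recurrence is asserted — with the two base cases $P_0$, $P_1$ handled separately by the initial conditions.
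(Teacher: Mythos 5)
Your proposal is correct and follows essentially the same route as the paper: substitute $\pi_j = d_j P_j$ into the monic recurrence, divide by $d_{k+1}$, evaluate the ratios $d_k/d_{k+1}=(2k+1)/(k+1)$ and $d_{k-1}/d_{k+1}=(4k^2-1)/(k(k+1))$ from the closed form of $d_k$, and clear the denominator $k+1$. The only cosmetic difference is that you compute $d_{k-1}/d_{k+1}$ via the intermediate ratio $d_{k-1}/d_k$, which changes nothing of substance.
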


\begin{proof}
We start the proof by substituting~\eqref{eq:def_classical} into
the \emph{monic} Legendre three-term recurrence (Corollary~\ref{cor:The-three-term-recurrence-legendre}):
\[
d_{k+1}P_{k+1}(t)\;=\;t\,d_{k}P_{k}(t)\;-\;\frac{k^{2}}{4k^{2}-1}\,d_{k-1}P_{k-1}(t).
\]
Dividing by $d_{k+1}$ gives 
\[
P_{k+1}(t)\;=\;\frac{d_{k}}{d_{k+1}}\,t\,P_{k}(t)\;-\;\frac{k^{2}}{4k^{2}-1}\frac{d_{k-1}}{d_{k+1}}\,P_{k-1}(t).
\]
Using \eqref{eq:d_k_pi_k} one finds 
\[
\frac{d_{k}}{d_{k+1}}=\frac{(2k+1)(2k+2)}{2(k+1)^{2}}=\frac{2k+1}{k+1},\qquad\frac{d_{k-1}}{d_{k+1}}=\frac{(2k+1)(2k-1)}{k(k+1)}.
\]
(The first equality comes from 
\[
\frac{(2(k+1))!}{2^{k+1}((k+1)!)^{2}}=\frac{(2k)!}{2^{k}(k!)^{2}}\;\frac{(2k+1)(2k+2)}{2(k+1)^{2}},
\]
and multiplying it by $\tfrac{2k(2k-1)}{2k^{2}}$ yields the second.)

\noindent Because 
\[
\frac{k^{2}}{4k^{2}-1}\,\frac{d_{k-1}}{d_{k+1}}=\frac{k^{2}}{(2k+1)(2k-1)}\frac{(2k+1)(2k-1)}{k(k+1)}=\frac{k}{k+1},
\]
we obtain 
\[
P_{k+1}(t)=\frac{2k+1}{k+1}\,t\,P_{k}(t)-\frac{k}{k+1}\,P_{k-1}(t).
\]
Multiplying through by $k+1$ gives the claimed recurrence. 
\end{proof}

\subsection{Construction of an $n$-Point Gaussian Quadrature Rule}

\label{subsec:GW_quadrature}

Let $\{\pi_{k}\}_{k=0}^{\infty}$ be the monic orthogonal polynomials
from Theorem~\ref{thm:monic_recurrence}, with recurrence coefficients
$\{\alpha_{k}\}_{k\ge0}$ and $\{\beta_{k}\}_{k\ge1}$. Following~\cite{GolubWelsch1969},
define the \emph{Jacobi matrix} 
\[
J_{n}=\begin{pmatrix}\alpha_{0} & \sqrt{\beta_{1}}\\
\sqrt{\beta_{1}} & \alpha_{1} & \ddots\\
 & \ddots & \ddots & \sqrt{\beta_{n-1}}\\
 &  & \sqrt{\beta_{n-1}} & \alpha_{n-1}
\end{pmatrix}\in\mathbb{R}^{n\times n},
\]
which is real, symmetric, and tridiagonal. The following holds: 
\begin{itemize}
\item The eigenvalues $\{t_{i}\}_{i=1}^{n}$ of $J_{n}$ are \emph{exactly}
the nodes of the unique $n$-point Gaussian quadrature rule for the
measure $d\lambda$ (and also the roots of $\pi_{n}$). That is, $\{t_{i}\}_{i=1}^{n}\subset[a,b]$,
and the node sets for the $n$-point and $(n+1)$-point Gaussian quadrature
satisfy the alternating property \cite[Thm. 1.20]{Gautschi2004}. 
\item Let $v^{(i)}=(v_{1}^{(i)},\dots,v_{n}^{(i)})^{\top}$ be a \emph{unit}
eigenvector of $J_{n}$ associated with $t_{i}$. Then the corresponding
positive weight is 
\[
w_{i}=m_{0}\bigl(v_{1}^{(i)}\bigr)^{2}>0,\qquad i=1,\dots,n,\quad m_{0}=\int_{a}^{b}d\lambda(t).
\]
\end{itemize}
Because $J_{n}$ is symmetric and tridiagonal, its eigenvalues and
eigenvectors can be computed in $\mathcal{O}(n^{2})$ flops with high
relative accuracy via the implicit-shift symmetric QR algorithm, as
analyzed in~\cite{GolubVanLoan2013}. SciPy’s implementation \verb|eigh_tridiagonal|
uses the LAPACK routines \texttt{DSTEVD} and \texttt{DSTEV}, which
realize this algorithm.

\section{Karhunen--Loève Expansion of Stationary Isotropic Gaussian Random
Fields}\label{sec:Karhunen-Loeve-expansion-of}

Gaussian random fields (GRFs) are fundamental tools for modeling spatially
varying quantities subject to uncertainty across numerous scientific
and engineering disciplines \cite{AdlerTaylor2007}. A key challenge
lies in the computational representation of these infinite-dimensional
objects. The Karhunen--Loève (KL) expansion provides an optimal representation
in the sense that it minimizes the mean-squared truncation error for
a given number of terms \cite{Loeve1978}. This chapter focuses on
the KL expansion for a specific yet widely applicable class of GRFs:
zero-mean, stationary, and isotropic fields defined on a $D$-dimensional
hyper-rectangular domain $\Omega\subset\mathbb{R}^{D}$. 

\subsection{Gaussian Random Fields and Autocovariance}

Let $(\mathcal{A},\mathcal{F},\mathbb{P})$ be a complete probability
space, and let $\Omega=[a_{1},b_{1}]\times\dots\times[a_{D},b_{D}]\subset\mathbb{R}^{D}$
be a $D$-dimensional closed interval (hyper-rectangle). We consider
a real-valued Gaussian random field (GRF) $Z(\bm{x},\omega):\Omega\times\mathcal{A}\to\mathbb{R}$.
In the following, $Z(\bm{x})\sim\mathcal{N}(\mu,\sigma)$ denotes
a random variable representing the behavior of the field at a specific
spatial point $\bm{x}\in\Omega$. We assume $Z$ has zero mean: 
\begin{equation}
\mathbb{E}[Z(\bm{x})]=0,\quad\forall\bm{x}\in\Omega.
\end{equation}

We further assume that $Z$ is a \emph{second-order} random field,
meaning it has finite second moments at every spatial location: 
\begin{equation}
\mathbb{E}\bigl[Z(\bm{x})^{2}\bigr]<\infty,\quad\forall\bm{x}\in\Omega.
\end{equation}
This condition ensures that the covariance function introduced below
is well defined.

The field is characterized by its autocovariance function $C:\Omega\times\Omega\to\mathbb{R}$,
defined as: 
\begin{equation}
C(\bm{x},\bm{y})=\mathbb{E}[Z(\bm{x})Z(\bm{y})],\quad\forall\bm{x},\bm{y}\in\Omega.
\end{equation}

We assume $Z$ is stationary, meaning $C(\bm{x},\bm{y})$ depends
only on the separation vector $\bm{x}-\bm{y}$: 
\begin{equation}
C(\bm{x},\bm{y})=C_{S}(\bm{x}-\bm{y}).
\end{equation}

Furthermore, we assume $Z$ is isotropic, meaning $C_{S}(\bm{x}-\bm{y})$
depends only on the Euclidean norm of the separation vector, $\|\bm{x}-\bm{y}\|_{2}=\sqrt{\sum_{l=1}^{D}(x_{l}-y_{l})^{2}}$:
\begin{equation}
C(\bm{x},\bm{y})=C_{I}(\|\bm{x}-\bm{y}\|_{2}),
\end{equation}
where $C_{I}:[0,\infty)\to\mathbb{R}$ is a positive-semidefinite
function \cite{ChilesDelfiner2012}. Later in this paper, we will
drop the subscripts and distinguish between $C$, $C_{S}$, and $C_{I}$
based on the type of argument.

We assume that the sample paths of $Z$ are square-integrable over
$\Omega$, i.e., $Z(\cdot,\omega)\in L^{2}(\Omega)$ almost surely.
This is guaranteed if the covariance function $C$ satisfies $\int_{\Omega}C(\bm{x},\bm{x})\,d\bm{x}<\infty$.
For stationary fields on bounded domains, $C(\bm{x},\bm{x})=C_{I}(0)=\sigma^{2}$
(the variance), which is finite and ensures this condition holds.

The autocovariance function defines a linear integral operator $\mathcal{C}:L^{2}(\Omega)\to L^{2}(\Omega)$:
\begin{equation}
(\mathcal{C}f)(\bm{x})=\int_{\Omega}C(\bm{x},\bm{y})f(\bm{y})\,d\bm{y}.\label{eq:cov_operator}
\end{equation}

\subsection{The Karhunen--Loève Expansion Theorem}

The KL expansion provides a representation of the random field $Z(\bm{x})$
in terms of a deterministic orthonormal basis $\{\phi_{j}(\bm{x})\}_{j=1}^{\infty}$
of $L^{2}(\Omega)$ and a sequence of uncorrelated random variables
$\{\xi_{j}(\omega)\}_{j=1}^{\infty}$.
\begin{thm}[Karhunen--Loève \cite{Loeve1978}, \cite{GhanemSpanos1991}]
\label{thm:kl} Let $Z(\bm{x})$ be a zero-mean, second-order random
field on $\Omega$ with continuous autocovariance function $C(\bm{x},\bm{y})$,
such that $Z(\cdot,\omega)\in L^{2}(\Omega)$ almost surely. Let $(\lambda_{j},\phi_{j})_{j=1}^{\infty}$
be the eigenpairs of the autocovariance operator $\mathcal{C}$: 
\begin{equation}
\mathcal{C}\phi_{j}=\lambda_{j}\phi_{j},\quad\text{i.e.,}\quad\int_{\Omega}C(\bm{x},\bm{y})\phi_{j}(\bm{y})\,d\bm{y}=\lambda_{j}\phi_{j}(\bm{x}).\label{eq:kl_eigenproblem}
\end{equation}
Here, $\{\phi_{j}\}_{j=1}^{\infty}$ forms an orthonormal basis for
the closure of the range of $\mathcal{C}$ (and can be extended to
an orthonormal basis of $L^{2}(\Omega)$). Then $Z(\bm{x})$ admits
the following expansion: 
\begin{equation}
Z(\bm{x},\omega)=\sum_{j=1}^{\infty}\sqrt{\lambda_{j}}\,\xi_{j}(\omega)\phi_{j}(\bm{x}),\label{eq:kl_expansion}
\end{equation}
where the convergence is in the mean-square sense, uniformly in $\bm{x}\in\Omega$.

The random variables $\xi_{j}(\omega)$ are given by 
\begin{equation}
\xi_{j}(\omega)=\frac{1}{\sqrt{\lambda_{j}}}\int_{\Omega}Z(\bm{x},\omega)\phi_{j}(\bm{x})\,d\bm{x},\quad(\text{for }\lambda_{j}>0)
\end{equation}
and satisfy $\mathbb{E}[\xi_{j}]=0$ and $\mathbb{E}[\xi_{j}\xi_{m}]=\delta_{jm}$
(Kronecker delta). If $Z$ is a Gaussian random field, then the $\xi_{j}$
are independent standard Gaussian random variables. 
\end{thm}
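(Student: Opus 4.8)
The plan is to combine Mercer's theorem for the covariance operator $\mathcal{C}$ with a direct second-moment computation for the coefficients $\xi_j$. First I would record the structural facts about $\mathcal{C}$: since $\Omega$ is compact and $C$ is continuous and positive semidefinite, $\mathcal{C}$ is a self-adjoint, compact, positive operator on $L^2(\Omega)$, so its eigenvalues satisfy $\lambda_1 \ge \lambda_2 \ge \cdots \ge 0$ with $\lambda_j \to 0$, the eigenfunctions $\phi_j$ may be taken continuous and orthonormal, and Mercer's theorem gives the bilinear expansion $C(\bm{x},\bm{y}) = \sum_j \lambda_j \phi_j(\bm{x})\phi_j(\bm{y})$ with absolute and uniform convergence on $\Omega \times \Omega$; in particular $\sum_j \lambda_j \phi_j(\bm{x})^2 = C(\bm{x},\bm{x})$ uniformly in $\bm{x}$. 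I would cite a standard reference for this rather than reprove it, and I would also note that continuity of $C$ makes $Z$ mean-square continuous, so $Z$ admits a jointly measurable modification; this legitimises the Fubini interchanges used below.

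Next I would verify that the $\xi_j$ are well defined and square-integrable: since $Z(\cdot,\omega) \in L^2(\Omega)$ almost surely, Cauchy--Schwarz on $\Omega$ gives $\bigl(\int_\Omega Z\,\phi_j\bigr)^2 \le \|Z(\cdot,\omega)\|_{L^2(\Omega)}^2$, and taking expectations together with $\int_\Omega \mathbb{E}[Z(\bm{x})^2]\,d\bm{x} = \int_\Omega C(\bm{x},\bm{x})\,d\bm{x} < \infty$ shows $\mathbb{E}[\xi_j^2] < \infty$; the same bound justifies interchanging expectation and integration. The moment identities then follow directly: $\mathbb{E}[\xi_j] = \lambda_j^{-1/2}\int_\Omega \mathbb{E}[Z(\bm{x})]\phi_j(\bm{x})\,d\bm{x} = 0$, and
\[
\mathbb{E}[\xi_j\xi_m] = \frac{1}{\sqrt{\lambda_j\lambda_m}}\int_\Omega\!\int_\Omega C(\bm{x},\bm{y})\phi_j(\bm{x})\phi_m(\bm{y})\,d\bm{x}\,d\bm{y} = \frac{\lambda_m}{\sqrt{\lambda_j\lambda_m}}\langle\phi_j,\phi_m\rangle = \delta_{jm},
\]
using $\mathcal{C}\phi_m = \lambda_m\phi_m$. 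The same computation yields the cross term $\mathbb{E}[Z(\bm{x})\xi_j] = \lambda_j^{-1/2}\int_\Omega C(\bm{x},\bm{y})\phi_j(\bm{y})\,d\bm{y} = \sqrt{\lambda_j}\,\phi_j(\bm{x})$.

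For the convergence statement I would set $Z_N(\bm{x},\omega) = \sum_{j=1}^N \sqrt{\lambda_j}\,\xi_j(\omega)\phi_j(\bm{x})$ and expand $\mathbb{E}[(Z(\bm{x}) - Z_N(\bm{x}))^2]$. The cross-term identity and $\mathbb{E}[\xi_j\xi_m] = \delta_{jm}$ make both $\mathbb{E}[Z(\bm{x})Z_N(\bm{x})]$ and $\mathbb{E}[Z_N(\bm{x})^2]$ equal to $\sum_{j=1}^N \lambda_j \phi_j(\bm{x})^2$, whence
\[
\mathbb{E}\bigl[(Z(\bm{x}) - Z_N(\bm{x}))^2\bigr] = C(\bm{x},\bm{x}) - \sum_{j=1}^N \lambda_j \phi_j(\bm{x})^2,
\]
which tends to $0$ uniformly in $\bm{x}$ by the uniform convergence in Mercer's theorem --- exactly the asserted uniform mean-square convergence. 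For the Gaussian case, each $\xi_j$ is an $L^2$-limit of finite linear combinations of the jointly Gaussian family $\{Z(\bm{x}) : \bm{x} \in \Omega\}$, hence Gaussian; the family $\{\xi_j\}$ is jointly Gaussian, and being pairwise uncorrelated with unit variance it is a sequence of independent standard normal variables. I expect the main obstacle to be the appeal to Mercer's theorem: it is the \emph{uniform} (rather than merely pointwise, or $L^1(\Omega)$) convergence of the diagonal series $\sum_j \lambda_j \phi_j(\bm{x})^2$ that upgrades mean-square convergence to be uniform in $\bm{x}$; the remainder is the routine bookkeeping above.
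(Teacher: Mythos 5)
Your proof is correct and is the standard classical argument: Mercer's expansion of the continuous kernel, the second-moment identities $\mathbb{E}[\xi_j\xi_m]=\delta_{jm}$ and $\mathbb{E}[Z(\bm{x})\xi_j]=\sqrt{\lambda_j}\,\phi_j(\bm{x})$, and the telescoping identity $\mathbb{E}[(Z-Z_N)^2]=C(\bm{x},\bm{x})-\sum_{j\le N}\lambda_j\phi_j(\bm{x})^2$, with uniformity supplied exactly by the uniform convergence of the diagonal Mercer series. The paper does not prove this theorem at all --- it cites it to Lo\`eve and Ghanem--Spanos (the operator-theoretic prerequisites you list are only established later, in Proposition~\ref{prop:operator_properties}) --- and your argument reproduces the proof found in those references, including the correct handling of the Gaussian case via $L^2$-limits of jointly Gaussian linear combinations.
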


\section{Approximation of Autocovariance Functions via Gaussian Sums}\label{sec:gauss_sum}

Let $d=\|\bm{x}-\bm{y}\|_{2}$ denote the Euclidean distance between
two points $\bm{x},\bm{y}\in\mathbb{R}^{D}$. We consider stationary,
isotropic autocovariance functions $C(d)$ defined for $d\ge0$. For
random fields in two or more dimensions $D$, it is highly desirable
to express the autocovariance in a form that allows separation of
individual dimensions, i.e., 
\[
C(d)=\sum_{i=1}^{k}a_{i}\prod_{l=1}^{D}C_{il}(d_{l}^{2}),\quad d_{l}^{2}=\left(x_{l}-y_{l}\right)^{2}.
\]
Such a form of autocovariance will be very useful later for Galerkin
discretization in subsection \ref{sec:galerkin}. We investigate the
approximation of $C(d)$ using finite sums of Gaussian functions:
\begin{equation}
C_{k}(d)=\sum_{i=1}^{k}a_{i}e^{-b_{i}d^{2}}\label{eq:gaussian_sum}
\end{equation}
where $a_{i}\in\mathbb{R}$ are coefficients of the basis functions
and $b_{i}>0$ are their scale parameters. Note that $d^{2}=\sum_{l=1}^{D}d_{l}^{2}$
and therefore 
\[
e^{-b_{i}d^{2}}=\prod_{l=1}^{D}e^{-b_{i}d_{l}^{2}},\quad\forall b_{i}.
\]

\subsection{General Approximability}

We first establish the general conditions under which a function $C(d)$
can be approximated arbitrarily well by sums of the form \eqref{eq:gaussian_sum}
on a compact interval $[0,L]$ for some $L>0$. This interval is typically
chosen large enough to capture the significant decay of the covariance
function. The natural space for measuring approximation error is often
the Hilbert space $L^{2}(0,L)$ equipped with the norm $\|f\|_{L^{2}}=(\int_{0}^{L}|f(d)|^{2}dd)^{1/2}$.
\begin{thm}[{Stone--Weierstrass \cite[Thm. 7.32]{Rudin1976}}]
\label{Stone--Weierstrass}Let $K$ be a compact interval $[0,L]$
for some $L>0$ and let $\mathcal{A}$ be an \emph{algebra} of real--valued
continuous functions on $K$; that is, $\mathcal{A}\subset C([0,L])$
is closed under linear combinations and pointwise products. If
\begin{itemize}
\item $\mathcal{A}$ \emph{separates points}: for every $x\neq y$ in $K$
there exists $f\in\mathcal{A}$ with $f(x)\neq f(y)$, and
\item $\mathcal{A}$ \emph{vanishes at no point}: for every $x\in K$ there
exists $f\in\mathcal{A}$ with $f(x)\neq0$,
\end{itemize}
then the uniform (supremum--norm) closure $\overline{\mathcal{A}}$
equals the whole space $C([0,L])$.
\end{thm}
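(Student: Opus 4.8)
The plan is to reproduce the classical proof of the Stone--Weierstrass theorem (this is exactly the content of the cited \cite[Thm.~7.32]{Rudin1976}); since the statement concerns only $C([0,L])$ and makes no reference to the Legendre or Gaussian-sum machinery, the argument is self-contained. Write $\mathcal{B}$ for the uniform closure $\overline{\mathcal{A}}$ inside $C([0,L])$. The first step is to check that $\mathcal{B}$ inherits all the hypotheses: a uniform limit of continuous functions is continuous, and uniform limits respect sums, scalar multiples, and products (the last because a uniformly convergent sequence on a compact set is uniformly bounded), so $\mathcal{B}$ is again an algebra; it still separates points and vanishes at no point simply because $\mathcal{A}\subseteq\mathcal{B}$. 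As $\mathcal{B}$ is closed, it then suffices to prove $\mathcal{B}=C([0,L])$.

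The second step is the two-point interpolation lemma: for distinct $x_1,x_2\in[0,L]$ and arbitrary reals $c_1,c_2$ there is $f\in\mathcal{A}$ with $f(x_1)=c_1$ and $f(x_2)=c_2$. One builds it from $g\in\mathcal{A}$ with $g(x_1)\ne g(x_2)$ and from $h,k\in\mathcal{A}$ nonzero at $x_1$ and $x_2$ respectively; the combinations $u=gk-g(x_1)k$ and $v=gh-g(x_2)h$ lie in $\mathcal{A}$ (scalar multiples and products of algebra elements), vanish at $x_1$ resp.\ $x_2$ and are nonzero at the other point, so a suitable linear combination of $u$ and $v$ does the job. No constant functions are needed anywhere.

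The third and analytically essential step is to show that $\mathcal{B}$ is a sublattice: if $f\in\mathcal{B}$ then $|f|\in\mathcal{B}$. Fix $M=\|f\|_\infty$; it is enough to approximate $t\mapsto|t|$ uniformly on $[-M,M]$ by polynomials, and to arrange the approximating polynomial to have no constant term (replace $p(t)$ by $p(t)-p(0)$, which costs only an extra $|p(0)|=\bigl|p(0)-|0|\bigr|$ in the sup error), so that applying it to $f$ keeps us inside the algebra $\mathcal{B}$. The uniform polynomial approximation of $|t|$ is itself elementary — e.g.\ the recursion $P_0\equiv 0$, $P_{n+1}(s)=P_n(s)+\tfrac12\bigl(s-P_n(s)^2\bigr)$ converges uniformly to $\sqrt{s}$ on $[0,1]$, whence $|t|=\sqrt{t^2}$ follows by rescaling — and this is the one place where more than algebra and point-set topology enters. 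From $|f|\in\mathcal{B}$ and $\max(f,g)=\tfrac12(f+g+|f-g|)$, $\min(f,g)=\tfrac12(f+g-|f-g|)$ we get that the pointwise maximum and minimum of any finitely many members of $\mathcal{B}$ again lie in $\mathcal{B}$.

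Finally, the fourth step assembles these ingredients by the usual double-covering argument. Given $f\in C([0,L])$ and $\varepsilon>0$: for each pair $x,y$ use Step~2 to pick $h_{x,y}\in\mathcal{B}$ with $h_{x,y}(x)=f(x)$ and $h_{x,y}(y)=f(y)$; for fixed $x$ the sets $\{t:h_{x,y}(t)<f(t)+\varepsilon\}$ are open and cover $[0,L]$, so by compactness finitely many suffice and $g_x:=\min$ of the corresponding $h_{x,y}$ satisfies $g_x<f+\varepsilon$ everywhere and $g_x(x)=f(x)$; then the sets $\{t:g_x(t)>f(t)-\varepsilon\}$ are open and cover $[0,L]$, a finite subcover gives $g:=\max$ of the corresponding $g_x$ with $f-\varepsilon<g<f+\varepsilon$, i.e.\ $\|g-f\|_\infty\le\varepsilon$. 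Since $g\in\mathcal{B}$ and $\mathcal{B}$ is closed, $f\in\mathcal{B}$, which proves $\mathcal{B}=C([0,L])$. I expect Step~3 — the uniform polynomial approximation of $|t|$ together with the bookkeeping needed to eliminate its constant term so we remain inside an algebra that need not contain constants — to be the only genuinely delicate point; Steps~1, 2, and 4 are routine algebra and compactness.
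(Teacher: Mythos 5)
The paper states this result only as a cited classical theorem (Rudin, Thm.\ 7.32) and supplies no proof of its own, so there is nothing in the paper to diverge from; your argument is a correct reconstruction of exactly that textbook proof --- closure of $\mathcal{A}$ is again an algebra with the same properties, two-point interpolation without using constants, $|f|\in\overline{\mathcal{A}}$ via a constant-free uniform polynomial approximation of $t\mapsto|t|$, and the final min/max double-compactness argument. All steps, including the bookkeeping that removes the constant term of the approximating polynomial and the coverage of the degenerate pairs in the covering step, are handled correctly, so no changes are needed.
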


\begin{prop}
\label{prop:density} Let $V=\operatorname{span}\{g_{b}(d):=e^{-bd^{2}}:b>0\}$.
Then, the vector space $V$ is dense in the space of continuous functions
$C([0,L])$ equipped with the supremum norm $\|\cdot\|_{\infty}$.
Consequently, $V$ is also dense in $L^{2}(0,L)$.
\end{prop}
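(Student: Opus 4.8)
The plan is to verify the hypotheses of the Stone--Weierstrass theorem (Theorem~\ref{Stone--Weierstrass}) for the set $V$, obtain uniform density in $C([0,L])$, and then transfer this to $L^2(0,L)$ by a standard inclusion argument. Since each $g_b$ is continuous on $[0,L]$, we have $V\subset C([0,L])$ to begin with.

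First I would check that $V$ is an algebra. By construction $V=\operatorname{span}\{g_b:b>0\}$ is a linear subspace, so the only nontrivial point is closure under pointwise multiplication. The key identity is
\[
g_{b_1}(d)\,g_{b_2}(d)=e^{-b_1 d^2}e^{-b_2 d^2}=e^{-(b_1+b_2)d^2}=g_{b_1+b_2}(d),
\]
and since $b_1,b_2>0$ implies $b_1+b_2>0$, the product of two generators again lies in $V$; extending by bilinearity, the product of any two elements of $V$ is a finite linear combination of generators, hence lies in $V$. Thus $V$ is an algebra of real-valued continuous functions on $[0,L]$. Note that $V$ does \emph{not} contain the constant function $1$, which is precisely why the relevant form of Stone--Weierstrass is the one phrased via ``vanishes at no point'' rather than the variant assuming the algebra contains the constants.

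Next I would verify the two structural hypotheses. For \emph{separation of points}: given $x\neq y$ in $[0,L]$, both are nonnegative, so $x^2\neq y^2$, and for any fixed $b>0$ the map $t\mapsto e^{-bt}$ is injective, whence $g_b(x)=e^{-bx^2}\neq e^{-by^2}=g_b(y)$. For \emph{vanishing at no point}: for every $x\in[0,L]$ we have $g_1(x)=e^{-x^2}>0$, so some element of $V$ is nonzero at $x$. Theorem~\ref{Stone--Weierstrass} then yields that the uniform closure of $V$ is all of $C([0,L])$, i.e.\ $V$ is dense in $C([0,L])$ with respect to $\|\cdot\|_\infty$.

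Finally, for the $L^2$ claim I would use the elementary bound $\|h\|_{L^2(0,L)}\le\sqrt{L}\,\|h\|_{\infty}$ for $h\in C([0,L])$: uniform approximation of a continuous function by elements of $V$ is therefore also an $L^2$ approximation. Combining this with the standard density of $C([0,L])$ in $L^2(0,L)$ and the triangle inequality shows $V$ is dense in $L^2(0,L)$. I do not expect any genuine obstacle here; the only point requiring a moment's care is recognizing that the Gaussian family is already closed under products (so that $V$ is an honest algebra) and that, because it omits the constants, one must invoke the ``vanishes nowhere'' version of Stone--Weierstrass exactly as stated in Theorem~\ref{Stone--Weierstrass}.
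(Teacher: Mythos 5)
Your proof is correct and follows essentially the same route as the paper: verify the hypotheses of Theorem~\ref{Stone--Weierstrass} (algebra via $g_{b_1}g_{b_2}=g_{b_1+b_2}$, separation of points, vanishing nowhere), then transfer uniform density to $L^2(0,L)$ via $\|h\|_{L^2}\le\sqrt{L}\,\|h\|_\infty$ and the density of $C([0,L])$ in $L^2(0,L)$. The only (harmless) difference is that the paper adjoins the constant function, works with $\mathcal{A}=\operatorname{span}(\{1\}\cup\{g_b\})$, and identifies $\overline{V}=\overline{\mathcal{A}}$ via $g_b\to 1$ as $b\to 0$, whereas you apply the vanishes-at-no-point form of the theorem directly to $V$, which is equally valid and slightly more direct.
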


\begin{proof}
We verify the assumptions of the Stone--Weierstrass theorem (theorem
\ref{Stone--Weierstrass}) for 
\[
\mathcal{A}:=\operatorname{span}\Bigl(\{1\}\cup\{g_{b}:b>0\}\Bigr).
\]
Clearly $V\subset\mathcal{A}$ and as $b\rightarrow0$, $g_{b}\left(d\right)\rightarrow1$,
$\overline{V}=\overline{\mathcal{A}}$. So it suffices to show $\overline{\mathcal{A}}=C([0,L])$.
\begin{itemize}
\item $\mathcal{A}$ is an algebra: Given finite sums $f=\sum_{i=1}^{m}a_{i}g_{b_{i}}$
and $h=\sum_{j=1}^{n}c_{j}g_{c_{j}}$, their product is 
\[
fh=\sum_{i,j}a_{i}c_{j}e^{-(b_{i}+c_{j})d^{2}}\in\mathcal{A},
\]
so $\mathcal{A}$ is closed under pointwise multiplication (and, by
being defined as span, also closed under addition and scalar multiplication).
\item $\mathcal{A}$ separates points: Let $x\neq y$ in $[0,L]$. Because
$x^{2}\neq y^{2}$, we have $g_{b}(x)=e^{-bx^{2}}\neq e^{-by^{2}}=g_{b}(y)$
for every $b>0$. Hence a single $g_{b}\in\mathcal{A}$ distinguishes
$x$ and $y$.
\item $\mathcal{A}$ vanishes at no point: For any fixed $d\in[0,L]$ and
any $b>0$, $g_{b}(d)=e^{-bd^{2}}>0$; thus every point has a non-zero
value under every $g_{b}$, and in particular under some element of
$\mathcal{A}$.
\end{itemize}
Then, by Stone--Weierstrass, supremum--norm closure $\overline{\mathcal{A}}$
equals $C([0,L])$. Since convergence in the supremum norm on $[0,L]$
implies convergence in the $L^{2}(0,L)$ norm (specifically, $\|f\|_{L^{2}}\le\sqrt{L}\|f\|_{\infty}$),
$V$ is dense in $C([0,L])$ with the $L^{2}$ topology. As $C([0,L])$
is itself dense in $L^{2}(0,L)$, it follows by the triangle inequality
that $V$ is dense in $L^{2}(0,L)$.
\end{proof}
This proposition guarantees that any autocovariance function $C(d)$
that is continuous on $[0,L]$ (and thus belongs to $L^{2}(0,L)$)
can be approximated with arbitrary precision in the $L^{2}$ sense
by a finite sum $C_{k}(d)$ as defined in \eqref{eq:gaussian_sum}. 

\subsection{Conditions for Non-Negative Coefficients}\label{subsec:Conditions-for-Non-Negative}

While Proposition \ref{prop:density} guarantees the existence of
an approximation, it does not specify the signs of the coefficients
$a_{i}$. However, we require the approximation of the autocovariance
to be positive semidefinite; thus, the coefficients $a_{i}$ must
be positive. The sign structure of $a_{i}$ is intimately linked to
the concept of complete monotonicity.
\begin{defn}[Complete Monotonicity]
 A function $f(x)$ defined on $(0,\infty)$ is \emph{completely
monotonic} if it is of class $C^{\infty}$ and satisfies $(-1)^{k}f^{(k)}(x)\ge0$
for all $x>0$ and integers $k=0,1,2,\dots$. 
\end{defn}

Bernstein's theorem provides an alternative definition of complete
monotonicity of functions as Laplace transforms of non-negative measures,
which will be very useful to link to the square exponential approximation.
\begin{thm}[{Bernstein's \cite[Thm. 12a]{Widder1941}}]
\label{thm:bernstein} A necessary and sufficient condition that
$f(x)$ should be completely monotonic in $0\le x<\infty$ is that
\begin{equation}
f(x)=\int_{0}^{\infty}e^{-xt}d\alpha(t),
\end{equation}
where $\alpha(t)$ is bounded and non-decreasing and the integral
converges for $0\le x<\infty$.
\end{thm}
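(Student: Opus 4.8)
The statement is a biconditional, so I would prove the two implications separately, the ``if'' direction being routine and the ``only if'' direction being the substantive one.

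For sufficiency, suppose $f(x)=\int_{0}^{\infty}e^{-xt}\,d\alpha(t)$ with $\alpha$ bounded and non-decreasing. I would differentiate under the integral sign: fixing any $x_{0}>0$, for $x\ge x_{0}$ the candidate integrand $t^{k}e^{-xt}$ is dominated by $t^{k}e^{-x_{0}t}$, which is bounded and therefore $d\alpha$-integrable because $\alpha$ has finite total variation. Dominated convergence then justifies repeated differentiation and yields $f^{(k)}(x)=\int_{0}^{\infty}(-t)^{k}e^{-xt}\,d\alpha(t)$, hence $(-1)^{k}f^{(k)}(x)=\int_{0}^{\infty}t^{k}e^{-xt}\,d\alpha(t)\ge 0$ for all $k\ge 0$; smoothness and the inequalities including at $x=0$ follow from the same estimate together with the finiteness $f(0)=\int_{0}^{\infty}d\alpha<\infty$ coming from boundedness of $\alpha$.

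For necessity, assume $f$ is completely monotone on $[0,\infty)$. First I would record the elementary consequences ($f\ge 0$, $f$ non-increasing and convex, $(-1)^{k}f^{(k)}\ge 0$) and the key discretization fact: for any step $h>0$ the iterated difference $T_{h}^{k}f(x):=\sum_{j=0}^{k}(-1)^{j}\binom{k}{j}f(x+jh)$ equals $h^{k}(-1)^{k}f^{(k)}(\xi)\ge 0$ for some $\xi\in(x,x+kh)$, by the finite-difference mean value theorem. Consequently, for each fixed $h$ the sequence $c_{n}:=f(nh)$, $n=0,1,2,\dots$, is a completely monotone sequence, i.e.\ $(-1)^{k}\Delta^{k}c_{n}\ge 0$ for all $k,n$. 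By the Hausdorff moment theorem there is a bounded non-decreasing $\gamma_{h}$ on $[0,1]$ with $f(nh)=\int_{0}^{1}s^{n}\,d\gamma_{h}(s)$, and the substitution $s=e^{-ht}$ rewrites this as $f(nh)=\int_{0}^{\infty}e^{-nht}\,d\alpha_{h}(t)$ with $\alpha_{h}$ bounded, non-decreasing, of total mass at most $f(0)$. Thus $f$ already agrees with a Laplace transform on the lattice $\{nh:n=0,1,2,\dots\}$.

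To remove the lattice restriction I would let $h=2^{-m}$ with $m\to\infty$: since every $\alpha_{2^{-m}}$ has total mass bounded by $f(0)$, Helly's selection theorem extracts a subsequence $h_{i}\to 0$ along which $\alpha_{h_{i}}$ converges weakly to a bounded non-decreasing $\alpha$. For arbitrary $x\ge 0$, choosing lattice points $n_{i}h_{i}\to x$ gives $f(n_{i}h_{i})\to f(x)$ by continuity of $f$, while a Helly--Bray argument — truncating the integral at a large cut-off $R$, bounding the tail via $e^{-n_{i}h_{i}t}\le e^{-(x/2)t}$ for $i$ large (the borderline case $x=0$ being handled because the total mass equals $f(0)$ exactly) — shows $\int_{0}^{\infty}e^{-n_{i}h_{i}t}\,d\alpha_{h_{i}}(t)\to\int_{0}^{\infty}e^{-xt}\,d\alpha(t)$. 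Hence $f(x)=\int_{0}^{\infty}e^{-xt}\,d\alpha(t)$ for all $x\ge 0$, with $\alpha$ determined at its continuity points by injectivity of the Laplace transform. The main obstacles are (i) the appeal to the Hausdorff moment theorem, which for a self-contained treatment needs its own Bernstein-polynomial argument, and (ii) the exchange of limit and integral in this last refinement step, where the integrands are neither compactly supported nor uniformly small and the cut-off-plus-tail estimate above is exactly what makes it go through. An alternative that bypasses Hausdorff is Widder's inversion: define $\alpha$ directly as the weak limit of the non-negative densities $\rho_{k}(u)=\tfrac{(-1)^{k}}{k!}(k/u)^{k+1}f^{(k)}(k/u)$ and verify $\int_{0}^{\infty}e^{-xu}\rho_{k}(u)\,du\to f(x)$; its analytic estimates are of comparable difficulty, so I would favor the Hausdorff route for transparency.
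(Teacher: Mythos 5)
The paper does not prove this statement at all: Theorem~\ref{thm:bernstein} is imported verbatim from Widder \cite[Thm.~12a]{Widder1941} and used as a black box, so there is no in-paper argument to compare yours against. Judged on its own, your outline follows the classical Hausdorff-moment route (lattice restriction, completely monotone sequences, Hausdorff's theorem, Helly selection, Helly--Bray passage to the limit), which is one of the two standard proofs; the Widder-inversion alternative you mention is the other. The sufficiency half and the overall necessity strategy are sound.

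Two endpoint issues deserve attention, one of which is a real (though fixable) gap. In the necessity direction, your parenthetical claim that ``the total mass equals $f(0)$ exactly'' is unjustified and even contradicts your earlier (correct) bound ``at most $f(0)$'': the Hausdorff measure $\gamma_{h}$ may a priori carry an atom at $s=0$, which is lost under the substitution $s=e^{-ht}$, and in the Helly limit mass can additionally escape to $t=\infty$. Hence your cut-off argument establishes $f(x)=\int_{0}^{\infty}e^{-xt}\,d\alpha(t)$ only for $x>0$, and the case $x=0$ does not follow as written. The standard repair is to let $x\downarrow0$ in the identity already proved: monotone convergence gives $\int_{0}^{\infty}e^{-xt}\,d\alpha(t)\uparrow\alpha(\infty)-\alpha(0)$, while continuity of $f$ at $0$ (which is exactly what ``completely monotonic in $0\le x<\infty$'' adds over the open half-line, and what distinguishes the bounded-$\alpha$ statement from the version admitting $f(x)=1/x$) gives $f(x)\to f(0)$; thus the total mass is $f(0)$ and the representation extends to $x=0$. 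Relatedly, in the sufficiency direction your phrase ``the inequalities including at $x=0$'' overreaches: for bounded $\alpha$ the moments $\int t^{k}\,d\alpha$ need not be finite, so $f^{(k)}(0^{+})$ may be infinite; what the closed-interval definition requires at $0$ is only continuity, which follows from monotone convergence as $x\downarrow0$. With these two endpoint corrections your proof is complete and matches the textbook argument.
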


Now let $x=d^{2}$ and set $f(x)=C(\sqrt{x})$. Suppose $f$ is completely
monotone on $(0,\infty)$, or at least on $(0,L^{2}]$. By Bernstein’s
theorem (Theorem~\ref{thm:bernstein}), $f$ admits the Laplace representation
\begin{equation}
C(d)=f(d^{2})=\int_{0}^{\infty}e^{-d^{2}t}\,d\mu(t),\label{eq:cm_representation}
\end{equation}
where $\mu$ is a non-negative, bounded, and non-decreasing measure.
The integral in~\eqref{eq:cm_representation} can be approximated
by the finite Gaussian sum 
\begin{equation}
C(d)\;\approx\;\sum_{i=1}^{k}w_{i}e^{-d^{2}t_{i}}.\label{eq:gaussian_sum-1}
\end{equation}

Because $\mu\ge0$, it induces a bilinear form on the space of polynomials:
\[
\langle p,q\rangle_{\mu}:=\int_{0}^{\infty}p(t)q(t)\,d\mu(t).
\]
If $\mu$ has infinite support, this form is strictly positive and
thus defines an inner product on the space of polynomials. In that
case,~\eqref{eq:gaussian_sum-1} can be interpreted as a Gaussian
quadrature rule with nodes $b_{i}=t_{i}>0$ and corresponding weights
$a_{i}=w_{i}>0$. 

Classical error bounds for Gaussian quadrature with analytic integrands
(see, e.g., \cite[Ch.~2]{Gautschi2004}) imply that the error in~\eqref{eq:gaussian_sum-1}
decays \emph{exponentially} with~$k$. Hence the squared-exponential
approximation should converge exponentially fast.

If $\mu$ is supported on finitely many points, $d\mu(t)=\sum_{i=1}^{k}w_{i}\delta_{t_{i}}(dt)$
with $a_{i}=w_{i}>0$ and $b_{i}=t_{i}>0$, then 
\[
C(d)=\sum_{i=1}^{k}a_{i}e^{-d^{2}b_{i}},
\]
so $C$ is already a finite Gaussian mixture and no further approximation
is necessary.

\subsubsection{Non-negative coefficients for isotropic autocovariance functions}

We now examine which \emph{isotropic} autocovariance functions can
be approximated by a mixture of Gaussians (squared-exponentials) with
\emph{non-negative} coefficients~$a_{i}$ in~\eqref{eq:gaussian_sum}.
\begin{thm}[{Schoenberg {\cite[Thm.~3]{Schoenberg1938}}, {\cite[Thm.~7.14]{Wendland2004}}}]
\label{thm:schoenberg_equiv} For a function $C\colon[0,\infty)\to\mathbb{R}$
the following two properties are equivalent: 
\begin{enumerate}
\item[(i)] $C(\|x\|)$ is positive definite on \emph{every} $\mathbb{R}^{D}$,
$D\in\mathbb{N}$; 
\item[(ii)] the radial profile $C(\sqrt{\cdot})$ is completely monotone on $[0,\infty)$
and not constant. 
\end{enumerate}
\end{thm}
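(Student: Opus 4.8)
The plan is to prove the two implications separately, with the harder direction being (i)$\Rightarrow$(ii). For the easy direction, (ii)$\Rightarrow$(i): assuming $C(\sqrt{\cdot})$ is completely monotone and non-constant, Bernstein's theorem (Theorem~\ref{thm:bernstein}) gives the representation $C(d)=\int_0^\infty e^{-d^2 t}\,d\mu(t)$ with $\mu$ a non-negative bounded measure, and non-constancy of $C(\sqrt{\cdot})$ forces $\mu$ to put mass outside $t=0$. Since each Gaussian $x\mapsto e^{-t\|x\|^2}$ is positive definite on $\mathbb{R}^D$ for every $D$ (a standard fact, provable directly from the Fourier transform of a Gaussian being a positive Gaussian, or from the fact that $e^{-t\|x-y\|^2}=\prod_{l=1}^D e^{-t(x_l-y_l)^2}$ is a product of one-dimensional positive-definite kernels), the integral mixture $C(\|x\|)=\int_0^\infty e^{-t\|x\|^2}\,d\mu(t)$ is a non-negative superposition of positive-definite kernels, hence positive definite on every $\mathbb{R}^D$. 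Strict positive definiteness (as opposed to merely positive semidefinite) follows because $\mu$ has infinite support on $[0,\infty)$ is not needed here—only that it charges some $t>0$, which already makes the kernel non-degenerate on any finite set of distinct points.

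For the direction (i)$\Rightarrow$(ii), the key is to show that positive definiteness on \emph{all} $\mathbb{R}^D$ simultaneously is a much stronger constraint than on a fixed $D$: letting $D\to\infty$ forces the Gaussian profile. The standard argument (following Schoenberg) runs as follows. First I would establish that if $C(\|\cdot\|)$ is positive definite on $\mathbb{R}^D$, then $C$ itself is continuous and bounded (one may normalize $C(0)=1$), and that positive definiteness on $\mathbb{R}^D$ for all $D$ implies positive definiteness on the infinite-dimensional Hilbert space $\ell^2$, i.e.\ $C(\|x-y\|_{\ell^2})$ is a positive-definite kernel on $\ell^2$. Then I would use the characterization of positive-definite radial functions on Hilbert space: by a limiting/compactness argument on the Schoenberg representation in finite dimensions (where $C(\|x\|)$ on $\mathbb{R}^D$ is a mixture of the dimension-$D$ "spherical" kernels $\Omega_D(r\|x\|)$ over a measure in $r$), passing $D\to\infty$ replaces $\Omega_D(r\,\cdot)$ by its limit $e^{-r^2\|\cdot\|^2/2}$ (this is the classical fact that the normalized surface-measure characteristic functions of spheres in $\mathbb{R}^D$ converge to a Gaussian as $D\to\infty$). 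This yields $C(d)=\int_0^\infty e^{-r^2 d^2/2}\,d\nu(r)$ for some probability measure $\nu$; substituting $x=d^2$ and $t=r^2/2$ exhibits $C(\sqrt{x})$ as a Laplace transform of a non-negative measure, so by Bernstein's theorem it is completely monotone, and it is non-constant precisely when $\nu$ is not concentrated at $r=0$, i.e.\ when $C$ is not the constant function.

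The main obstacle is the infinite-dimensional limit argument: making rigorous the claim that the finite-dimensional Schoenberg kernels $\Omega_D$ converge to a Gaussian and that the representing measures $\nu_D$ converge (in a suitable weak sense, using Helly/Prokhorov compactness) to a limiting measure $\nu$ whose Laplace-type representation survives the limit. This requires care with tightness of $\{\nu_D\}$ and with the uniform-on-compacta convergence $\Omega_D(r\,\cdot)\to e^{-r^2(\cdot)^2/2}$. In a pedagogically oriented paper the cleanest route is to cite Schoenberg's original argument \cite{Schoenberg1938} (or Wendland \cite[Thm.~7.14]{Wendland2004}) for this implication and present only the short, self-contained proof of (ii)$\Rightarrow$(i) in full detail, since that is the direction actually used downstream: it is exactly what licenses the non-negative squared-exponential mixture \eqref{eq:gaussian_sum-1} to be a legitimate (positive-semidefinite) covariance in every spatial dimension $D$.
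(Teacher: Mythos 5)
The paper does not actually prove this theorem: it is stated as a quoted classical result, with the proof delegated entirely to Schoenberg's original paper and to Wendland's book, so there is no internal argument to compare yours against. Your proposal is a faithful reconstruction of the standard proof structure. The direction (ii)$\Rightarrow$(i), which you give in full, is correct: Bernstein's representation (Theorem~\ref{thm:bernstein}, i.e.\ \eqref{eq:cm_representation}) exhibits $C(\|x\|)$ as a non-negative mixture of Gaussians $e^{-t\|x\|^{2}}$, each of which is strictly positive definite on every $\mathbb{R}^{D}$ by the Fourier/Bochner argument, and non-constancy of $C(\sqrt{\cdot})$ forces $\mu((0,\infty))>0$, which makes the mixed quadratic form strictly positive for distinct nodes and nonzero coefficients; note only that the product (Schur) argument you mention in passing yields positive \emph{semi}definiteness, so the Fourier positivity of the Gaussian is the ingredient actually needed for strictness, matching the strict sense in which the theorem (following Wendland) uses ``positive definite.'' For (i)$\Rightarrow$(ii) you correctly identify the classical route — positive definiteness on all $\mathbb{R}^{D}$ and hence on $\ell^{2}$, the finite-dimensional Schoenberg representation through the kernels $\Omega_{D}$, the limit $\Omega_{D}(r\,\cdot)\to e^{-r^{2}(\cdot)^{2}/2}$ as $D\to\infty$, Helly/Prokhorov compactness of the representing measures, and finally the easy direction of Bernstein — but you leave the tightness and uniform-convergence details to the cited sources, so this direction remains a sketch rather than a proof. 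That gap is real if the goal were a self-contained proof, but it coincides with the paper's own practice of citing rather than proving, and your closing observation is accurate: only (ii)$\Rightarrow$(i) is used downstream, since it is what legitimizes the non-negative squared-exponential mixture \eqref{eq:gaussian_sum-1} as a covariance in every spatial dimension.
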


Thus any isotropic kernel $C$ that is positive definite in every
dimension, including the Matérn, powered-exponential, Cauchy, and
rational-quadratic families used throughout spatial statistics, has
a completely monotone radial profile. Consequently, $C$ admits the
Bernstein-Widder representation~\eqref{eq:cm_representation}. Restricting
that integral to a bounded interval $[0,L]$ and replacing it by an
$k$-point Gauss quadrature yields the finite Gaussian mixture \eqref{eq:gaussian_sum}
with weights $a_{i}=w_{i}>0$ and nodes $b_{i}=t_{i}>0$. The requirement
$a_{i}>0$ is therefore \emph{not} an additional constraint, it is
the structural property that guarantees positive definiteness in every
dimension and is automatically satisfied for all kernels of practical
interest.

The only isotropic kernels that fail to be strictly positive definite
are the degenerate cases (e.g.\ the constant kernel $C(d)\equiv\sigma^{2}$
or the identically zero kernel), which correspond to random fields
with zero variance in some non-trivial linear combination of observations.
Excluding these trivial exceptions, every isotropic covariance function
encountered in practice fits naturally into the squared-exponential
approximation framework with non-negative coefficients. 

\subsection{Numerical estimation of squared-exponential approximations}\label{sec:numerical_examples_square_approx}

Let us focus on problems where the measure $\mu$ in~\eqref{eq:cm_representation}
has infinite support. In principle, the exponents $b_{i}$ could be
chosen as the abscissae of a Gaussian quadrature rule for~$\mu$.
However, the main challenge lies in the fact that the measure itself
is \emph{unknown}. We therefore adopt an optimization strategy to
determine an optimal set of $b_{i}$ for a given approximation rank
$k$. 

\subsubsection{Problem setting}\label{subsec:Problem-setting-kernels}

We approximate a one-dimensional covariance kernel by a non-negative
sum of $k$ squared-exponential terms 
\begin{equation}
C(d)\;\approx\;\sum_{i=1}^{k}a_{i}\mathrm{e}^{-b_{i}d^{2}},\qquad a_{i}>0,\;b_{i}>0,\label{eq:se-expansion}
\end{equation}
where $k\in\mathbb{N}$ is the \emph{rank} of the approximation. 

We consider the following five benchmark covariance kernels, all normalized
to a unit length scale and defined over the domain $d\in[0,2]$: 
\begin{itemize}
\item Exponential kernel: $\exp(-d)$ \cite{Rasmussen2006}; 
\item Matérn kernel ($\nu=\tfrac{5}{2}$): $C_{\frac{5}{2}}(d)=\bigl(1+\sqrt{5}d+\tfrac{5}{3}d^{2}\bigr)\exp(-\sqrt{5}d)$
\cite{Stein1999}; 
\item Stretched exponential kernel: $\exp\bigl(-d^{0.6}\bigr)$ \cite{Beran1994}; 
\item Rational--quadratic kernel: $\bigl(1+d^{2}/2\bigr)^{-1}$ \cite{Rasmussen2006}; 
\item Cauchy kernel: $\dfrac{1}{1+d}$, a special case of the (generalized)
Cauchy family \cite{Gneiting2004}. 
\end{itemize}
This set spans exponential, algebraic, and heavy-tailed behaviors,
posing various challenges for the finite-rank representation~\eqref{eq:se-expansion}.
We retain the domain $d\in[0,2]$ because all pairwise distances that
arise in subsequent experiments fall within this range. These experiments
involve decomposing covariance functions on hypercubes $[0,1]^{D}$,
where $D=1,2,3$. The maximum diameter of these domains is $\sqrt{3}$,
which ensures that all relevant distances are encompassed by the chosen
interval. 

\subsubsection{Optimization problem}

Before formulating the optimization problem, we introduce a reparameterization
that ensures the coefficients $b_{i}=\mathrm{e}^{\theta_{i}}$ remain
positive. In other words, we will optimize over the parameters $\theta_{i}$
instead. This reparameterization also addresses scaling issues, as
the values of $b_{i}$ grow exponentially, as demonstrated in subsequent
experiments. 

The continuous $L^{2}$ error is 
\[
\mathcal{J}(\mathbf{a},\boldsymbol{\theta})=\left\Vert \sum_{i=1}^{k}a_{i}\mathrm{e}^{-\mathrm{e}^{\theta_{i}}d^{2}}-f(d)\right\Vert _{2}=\Bigl[\int_{0}^{2}\Bigl(\sum_{i=1}^{k}a_{i}\mathrm{e}^{-\mathrm{e}^{\theta_{i}}d^{2}}-f(d)\Bigr)^{2}\mathrm{d}d\Bigr]^{1/2},\,\,b_{i}=\mathrm{e}^{\theta_{i}}.
\]
For computational practicality, the integral is approximated via numerical
integration. In this context, we propose a \emph{multi-level} Gauss--Legendre
rule, i.e., a composite rule consisting of Gauss--Legendre quadrature
applied over multiple intervals. The multi-level structure partitions
the interval $[0,L]$ into a sequence of subintervals: $[0,L\rho^{m}],[L\rho^{m},L\rho^{m-1}],\ldots,[L\rho^{1},L]$,
where $m\in\mathbb{N}$ denotes the number of levels and $\rho\in(0,1)$
is the division ratio. This form of numerical quadrature is specifically
designed to capture critical behavior near zero, particularly when
$b_{i}$ is large, as the squared-exponential kernel then requires
fine resolution near zero for accurate approximation. Numerical experiments
were conducted with $m=5$, geometric ratio $\rho=0.2$, and quadrature
degree $n=100$ on each subinterval, resulting in $(L+1)n=600$ nodes
$(x_{j},w_{j})_{j=1}^{600}$. The corresponding discrete objective
function is given by 
\[
J(\mathbf{a},\boldsymbol{\theta})=\Bigl[\sum_{j=1}^{600}w_{j}\Bigl(\sum_{i=1}^{k}a_{i}\mathrm{e}^{-\mathrm{e}^{\theta_{i}}x_{j}^{2}}-f(x_{j})\Bigr)^{2}\Bigr]^{1/2}.
\]

Note that, due to the structure of the problem, the objective functionals
$\mathcal{J}(\mathbf{a},\boldsymbol{\theta})$ and $J(\mathbf{a},\boldsymbol{\theta})$
possess multiple global minima, as the individual $(b_{i},a_{i})$
pairs can be interchanged without affecting the value of the functional. 

\subsubsection{Optimization Algorithm}

\label{subsec:Optimisation-algorithm}

The unknowns are $\mathbf{p}=(\mathbf{a},\boldsymbol{\theta})\in\mathbb{R}^{2k}$,
with non-negative $a_{i}\ge0$ and unconstrained $\theta_{i}\in\mathbb{R}$.
The non-negativity constraint is not enforced in the Newton algorithm,
as the global minima are expected to naturally yield positive values.
The minimization process is summarized as follows:
\begin{description}
\item [{Newton--type~iteration}] At iteration $\mathbf{p}_{j}$, we compute
the gradient $\nabla J_{j}$ and the Hessian $H_{j}$ using \textsc{JAX}
automatic differentiation (64-bit precision is necessary for stability).
A damped Newton search direction $\mathbf{s}_{j}$ is obtained from
\[
\bigl(H_{j}+\tau_{j}I\bigr)\mathbf{s}_{j}=-\nabla J_{j},\qquad\tau_{j}\ge0.
\]

\begin{enumerate}
\item \emph{Positive-definite case:} If $H_{j}+\tau_{j}I$ is positive definite,
i.e., $\nabla J_{j}^{\top}\mathbf{s}_{j}<0$ so that $\mathbf{s}_{j}$
is a descent direction, set $\mathbf{d}_{j}=\mathbf{s}_{j}$ and decrease
the damping parameter ($0\leq\tau_{j+1}<\tau_{j}$).
\item \emph{Indefinite case:} Otherwise, increase the damping ($\tau_{j+1}>\tau_{j}$)
and fall back to steepest descent, $\mathbf{d}_{j}=-\nabla J_{j}$.
\end{enumerate}
\item [{Line~search}] \noindent A backtracking line search on $[0,1]$,
using the golden-section ratio, determines the step length $\alpha_{j}\in(0,1]$.
\item [{Update}] \noindent
\[
\mathbf{p}_{j+1}=\mathbf{p}_{j}+\alpha_{j}\mathbf{d}_{j}.
\]
\item [{Stopping~criteria}] \noindent Terminate when
\begin{enumerate}
\item $\|\nabla J_{j}\|_{2}<\varepsilon_{\nabla}$; 
\item $\|\mathbf{p}_{j+1}-\mathbf{p}_{j}\|_{2}<\varepsilon_{\mathbf{p}}$
for the last $n_{\mathbf{p}}$ iterations; or 
\item the maximum iteration count is exceeded. 
\end{enumerate}
\end{description}

\subsubsection{Initialization Strategy}

\label{sec:init-strategy}

The minimization of $J(\mathbf{a},\boldsymbol{\theta})$ via Newton’s
method is reliable only when the initial guess is sufficiently close
to the minimizer; if it is too far, the iteration may diverge. Hence,
one must employ either (i)~a global search strategy or (ii)~a \emph{good}
initial guess. We adopt the second approach because, thanks to the
structure of the problem, such a guess can be reasonably estimated.

For a fixed rank~$k$, the squared-exponential sum in \eqref{eq:se-expansion}
may be interpreted as a $k$-point quadrature rule for the Laplace--Stieltjes
integral in \eqref{eq:cm_representation}, with \emph{nodes}~$b_{i}$
and \emph{weights}~$a_{i}$. Gaussian quadrature minimizes the integration
error among all rules with $k$ nodes; therefore, the optimization
problem is expected to recover the same node set. Denote the optimal
exponents by $^{\star}b_{i}^{k}$, $i=1,\dots,k$. Because the Gaussian
nodes of consecutive orders alternate \cite[Thm. 1.20]{Gautschi2004},
the optimal exponents are expected to satisfy 
\[
0<^{\star}b_{1}^{k+1}<{}^{\star}b_{1}^{k}<{}^{\star}b_{2}^{k+1}<{}^{\star}b_{2}^{k}<\cdots<{}^{\star}b_{k}^{k}<{}^{\star}b_{k+1}^{k+1},\qquad k\ge1.
\]
We exploit this property and additionally work with the log-transformed
parameters $\theta_{i}=\log b_{i}$ to accommodate the exponential
growth of~$b_{i}$. Initial guesses $^{(0)}\theta_{i}^{k}$ that
preserve the observed interior gap ratios are constructed as detailed
in Algorithm~\ref{alg:init-theta}. Once the $\theta_{i}$ have been
initialized, the corresponding coefficients $a_{i}$ are obtained
via non-negative least squares (NNLS). 

\begin{algorithm}[t]
\caption{Initial guess $^{(0)}\theta_{i}^{k}$ for rank~$k$}
\label{alg:init-theta} \begin{algorithmic}[1] \Require maximum
rank $k_{\max}$ 

\Statex\textbf{Base ranks} 

\State $^{(0)}\theta_{1}^{1}\gets0$

\State $^{(0)}\theta_{1}^{2}\gets{}^{\star}\theta_{1}^{1}-1$,\;
$^{(0)}\theta_{2}^{2}\gets{}^{\star}\theta_{1}^{1}+1$ 

\Statex\textbf{Ranks $k\ge3$} 

\For{$k=3$ \textbf{to} $k_{\max}$} 

\State $^{(0)}\theta_{1}^{k}\gets{}^{\star}\theta_{1}^{k-1}+\bigl({}^{\star}\theta_{1}^{k-1}-{}^{\star}\theta_{1}^{k-2}\bigr)$ 

\State $^{(0)}\theta_{k}^{k}\gets{}^{\star}\theta_{k-1}^{k-1}+\bigl({}^{\star}\theta_{k-1}^{k-1}-{}^{\star}\theta_{k-2}^{k-2}\bigr)$ 

\For{$i=2$ \textbf{to} $k-1$} 

\State $r_{i}^{k}\gets\begin{cases}
\tfrac{1}{2}, & i=k-1,k=3,\\
r_{i-1}^{k}, & i=k-1,k>3\\
r_{i}^{k}:\,{}^{\star}\theta_{i}^{k-1}=r_{i}^{k}{}^{\star}\theta_{i-1}^{k-2}+(1-r_{i}^{k}){}^{\star}\theta_{i}^{k-2}, & \text{otherwise}
\end{cases}$ 

\State $^{(0)}\theta_{i}^{k}\gets r_{i}^{k}\,{}^{\star}\theta_{i-1}^{k-1}+(1-r_{i}^{k})\,{}^{\star}\theta_{i}^{k-1}$ 

\EndFor 

\EndFor 

\end{algorithmic} 
\end{algorithm}

\subsubsection{Experiments}

\begin{figure}[t]
\centering{}\includegraphics{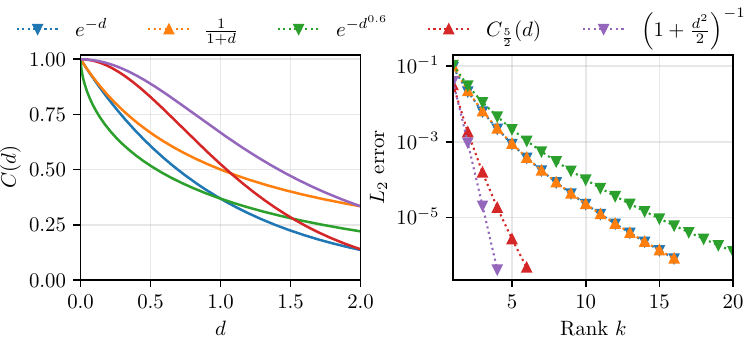} \caption{Left: Illustration of selected target kernels. Right: Approximation
of the five target kernels by squared-exponential approximation with
increasing rank size until $J(\mathbf{a},\boldsymbol{\theta})<10^{-6}$.}
\label{fig:cov-approx}
\end{figure}

\begin{figure}[t]
\centering{}\includegraphics{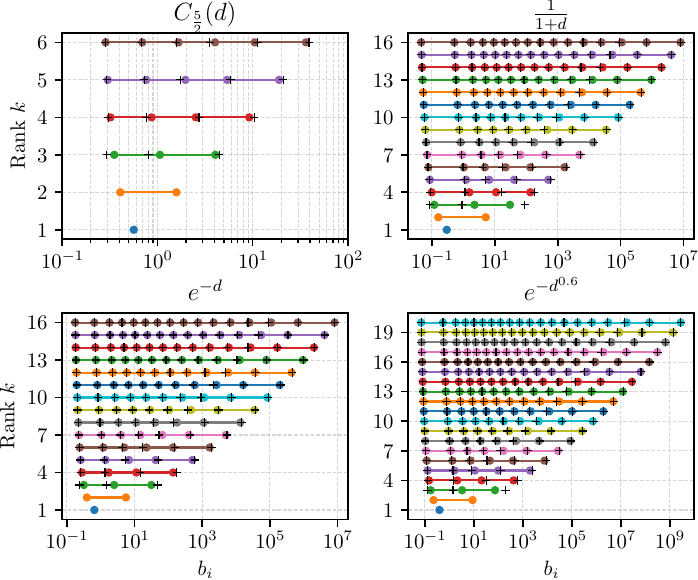} \caption{Optimal coefficients $b_{i}$ for computed ranks for four selected
kernels (black + signs illustrate the initial guesses).}
\label{fig:b-coeffs}
\end{figure}

\begin{table}[t]
\centering{}%
\begin{tabular}{lcrcrcrcr}
\toprule 
\multirow{1}{*}{kernel} & \multicolumn{2}{c}{$\varepsilon=10^{-2}$} & \multicolumn{2}{c}{$\varepsilon=10^{-3}$} & \multicolumn{2}{c}{$\varepsilon=10^{-4}$} & \multicolumn{2}{c}{$\varepsilon=10^{-6}$}\tabularnewline
\midrule 
 & $k$ & exact~$\varepsilon$ & $k$ & exact~$\varepsilon$ & $k$ & exact~$\varepsilon$ & $k$ & exact~$\varepsilon$\tabularnewline
\midrule 
$\mathrm{e}^{-d}$ & 3 & $6\times10^{-3}$ & 5 & $9\times10^{-4}$ & 8 & $9\times10^{-5}$ & 16 & $8\times10^{-7}$\tabularnewline
$C_{5/2}(d)$ & 2 & $2\times10^{-3}$ & 3 & $2\times10^{-4}$ & 4 & $2\times10^{-5}$ & 6 & $5\times10^{-7}$\tabularnewline
$\mathrm{e}^{-d^{0.6}}$ & 4 & $5\times10^{-3}$ & 7 & $6\times10^{-4}$ & 11 & $6\times10^{-5}$ & 20 & $1\times10^{-6}$\tabularnewline
$(1+d^{2}/2)^{-1}$ & 2 & $1\times10^{-3}$ & 2 & $1\times10^{-3}$ & 3 & $2\times10^{-5}$ & 4 & $4\times10^{-7}$\tabularnewline
$1/(1+d)$ & 3 & $6\times10^{-3}$ & 5 & $9\times10^{-4}$ & 8 & $9\times10^{-5}$ & 16 & $8\times10^{-7}$\tabularnewline
\bottomrule
\end{tabular}\caption{Ranks required to reach four target accuracies. Entries labeled "exact~$\varepsilon$"
show the actual achieved error for the reported rank.}
\label{tab:ranks}
\end{table}

We combine the optimization algorithm of Section~\ref{subsec:Optimisation-algorithm}
with the initialization strategy of Section~\ref{sec:init-strategy}.
Starting at rank~$k=1$, we increase the rank successively, with
each converged solution providing the initial guess for the next run.
The sequence terminates as soon as $J<10^{-6}$; otherwise, it proceeds
up to the upper limit $k_{\max}=20$. All calculations are performed
in double precision, which is required for the stable automatic differentiation
used in computing the Hessians.

Figure~\ref{fig:cov-approx} displays the five target kernels alongside
their squared-exponential approximations for representative ranks.
The plots reveal the expected exponential error decay, with a base
rate that depends on each kernel’s rate of decay near $d=0$. For
subsequent demonstrations, we refer to an approximation by the tolerance
$\varepsilon$ that bounds its discrete $L^{2}$ error. The ranks
required to achieve four tolerance levels, along with the corresponding
errors, are listed in Table~\ref{tab:ranks}. Figure~\ref{fig:b-coeffs}
shows the optimized exponents~$b_{i}$ alongside their initial guesses;
the alternating property, approximate exponential growth, and the
quality of the initial guesses are clearly evident.

\section{Numerical approximation of Karhunen--Loève expansion}\label{sec:Numerical-approximation-of}

Finding the exact eigenfunctions and eigenvalues of the autocovariance
operator is, in most cases, impossible. Except for specific cases
(e.g., exponential covariance on a one-dimensional interval), the
eigenvalue problem \eqref{eq:kl_eigenproblem} rarely admits an analytical
solution. Therefore, numerical approximation methods are essential.
We explore the use of Galerkin projection onto finite-dimensional
subspaces spanned by orthonormal polynomials. Specifically, we consider
tensor products of one-dimensional orthonormal polynomials, which
are well-suited for hyper-rectangular domains. 

\subsection{Well-Posedness of the Spectral Eigenvalue Problem}

The existence and properties of the KL expansion depend crucially
on the spectral characteristics of the autocovariance operator~$\mathcal{C}$
defined in \eqref{eq:cov_operator}. It is necessary to ensure that
the eigenvalue problem \eqref{eq:kl_eigenproblem} is well-posed,
i.e., that it admits a countable set of non-negative eigenvalues and
corresponding orthonormal eigenfunctions. 
\begin{prop}
\label{prop:operator_properties} Let $C(\bm{x},\bm{y})$ be the autocovariance
function of a zero-mean, second-order random field~$Z$ on the bounded
domain~$\Omega$, and assume that $C$ is continuous on~$\Omega\times\Omega$.
Then the integral operator $\mathcal{C}:L^{2}(\Omega)\to L^{2}(\Omega)$,
defined by \eqref{eq:cov_operator}, is linear, bounded, self-adjoint,
positive semi-definite, and compact. 
\end{prop}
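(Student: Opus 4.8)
The plan is to verify the five claimed properties one at a time, each being a short consequence of the continuity of $C$ on the compact set $\Omega\times\Omega$ together with the covariance structure of $Z$. Throughout, write $M:=\sup_{(\bm{x},\bm{y})\in\Omega\times\Omega}|C(\bm{x},\bm{y})|$, which is finite by continuity on a compact set, and note that consequently $C\in L^{2}(\Omega\times\Omega)$.

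\emph{Linearity} is immediate from the linearity of the integral in \eqref{eq:cov_operator}. For \emph{boundedness}, I would apply the Cauchy--Schwarz inequality in the inner integral to get the pointwise estimate $|(\mathcal{C}f)(\bm{x})|\le M\,|\Omega|^{1/2}\|f\|_{L^{2}}$, and then integrate in $\bm{x}$ to obtain $\|\mathcal{C}f\|_{L^{2}}\le M\,|\Omega|\,\|f\|_{L^{2}}$; equivalently, $\mathcal{C}$ is Hilbert--Schmidt with $\|\mathcal{C}\|\le\|\mathcal{C}\|_{\mathrm{HS}}=\|C\|_{L^{2}(\Omega\times\Omega)}$. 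For \emph{self-adjointness}, observe that the kernel is symmetric, $C(\bm{x},\bm{y})=\mathbb{E}[Z(\bm{x})Z(\bm{y})]=\mathbb{E}[Z(\bm{y})Z(\bm{x})]=C(\bm{y},\bm{x})$, and since $C\in L^{2}(\Omega\times\Omega)$, Fubini's theorem permits interchanging the order of integration, so $\langle\mathcal{C}f,g\rangle=\langle f,\mathcal{C}g\rangle$ for all $f,g\in L^{2}(\Omega)$.

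For \emph{positive semi-definiteness}, I would compute, for $f\in L^{2}(\Omega)$,
\[
\langle\mathcal{C}f,f\rangle=\int_{\Omega}\int_{\Omega}C(\bm{x},\bm{y})f(\bm{y})f(\bm{x})\,d\bm{y}\,d\bm{x}=\mathbb{E}\Bigl[\Bigl(\int_{\Omega}Z(\bm{x})f(\bm{x})\,d\bm{x}\Bigr)^{2}\Bigr]\ge0,
\]
where swapping $\mathbb{E}$ and the spatial integrals is justified by Fubini's theorem once one checks that $Y:=\int_{\Omega}Z(\bm{x},\omega)f(\bm{x})\,d\bm{x}$ is a bona fide random variable with finite second moment — which follows from the second-order hypothesis and $\int_{\Omega}C(\bm{x},\bm{x})\,d\bm{x}<\infty$ (itself a consequence of continuity of $C$ on the bounded domain). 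Finally, for \emph{compactness} I would use the standard finite-rank approximation: uniform continuity of $C$ on the compact set $\Omega\times\Omega$ lets one, for any $\varepsilon>0$, partition $\Omega$ into finitely many measurable cells $\{\Omega_{i}\}_{i=1}^{N}$ of diameter small enough that $C$ varies by at most $\varepsilon$ on each $\Omega_{i}\times\Omega_{j}$; the piecewise-constant kernel $C_{\varepsilon}(\bm{x},\bm{y})=\sum_{i,j}C(\bm{x}_{i},\bm{y}_{j})\,\mathbf{1}_{\Omega_{i}}(\bm{x})\,\mathbf{1}_{\Omega_{j}}(\bm{y})$ induces a finite-rank operator $\mathcal{C}_{\varepsilon}$, and the Hilbert--Schmidt bound gives $\|\mathcal{C}-\mathcal{C}_{\varepsilon}\|\le\|C-C_{\varepsilon}\|_{L^{2}(\Omega\times\Omega)}\le\varepsilon\,|\Omega|$, so $\mathcal{C}$ is a norm limit of finite-rank operators and hence compact. (One could instead simply cite that every Hilbert--Schmidt operator is compact.)

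The only point requiring genuine care — the "hard part," modest as it is — is the positive-semidefiniteness step, specifically the legitimacy of the Fubini interchange: one must confirm that $Y=\int_{\Omega}Z(\bm{x},\omega)f(\bm{x})\,d\bm{x}$ is jointly measurable, finite almost surely, and in $L^{2}(\mathcal{A},\mathbb{P})$, which is exactly where the second-order assumption on $Z$ and the integrability $\int_{\Omega}C(\bm{x},\bm{x})\,d\bm{x}<\infty$ enter. Every remaining step is routine manipulation with the Cauchy--Schwarz and Hilbert--Schmidt inequalities.
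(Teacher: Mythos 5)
Your proof is correct and follows essentially the same route as the paper's: kernel symmetry for self-adjointness, the identity $\langle\mathcal{C}f,f\rangle=\mathbb{E}\bigl[(\int_{\Omega}Z(\bm{x})f(\bm{x})\,d\bm{x})^{2}\bigr]$ for positive semi-definiteness, and the Hilbert--Schmidt property of the continuous kernel for boundedness and compactness. The only differences are cosmetic: where the paper cites the compactness of Hilbert--Schmidt integral operators, you unwind that citation into the standard finite-rank approximation argument, and you spell out the Fubini/second-moment justifications that the paper leaves implicit.
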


\begin{proof}
Linearity is immediate. Boundedness follows from the continuity of
$C$ on the compact set~$\Omega\times\Omega$. 
\begin{itemize}
\item \textbf{Self-adjointness:} Since $C(\bm{x},\bm{y})=\mathbb{E}[Z(\bm{x})Z(\bm{y})]=\mathbb{E}[Z(\bm{y})Z(\bm{x})]=C(\bm{y},\bm{x})$,
the kernel is symmetric. For any $f,g\in L^{2}(\Omega)$: 
\begin{align*}
\langle\mathcal{C}f,g\rangle_{L^{2}(\Omega)} & =\langle f,\mathcal{C}g\rangle_{L^{2}(\Omega)}.
\end{align*}
\item \textbf{Positive semi-definiteness:} For any $f\in L^{2}(\Omega)$:
\begin{align*}
\langle\mathcal{C}f,f\rangle_{L^{2}(\Omega)} & =\int_{\Omega}\int_{\Omega}\mathbb{E}[Z(\bm{x})Z(\bm{y})]f(\bm{y})f(\bm{x})\,d\bm{y}\,d\bm{x}\\
 & =\mathbb{E}\left[\left(\int_{\Omega}Z(\bm{x})f(\bm{x})\,d\bm{x}\right)^{2}\right]\ge0.
\end{align*}
\item \textbf{Compactness:} Since $\Omega$ is a bounded domain and $C$
is continuous on~$\Omega\times\Omega$, it is square-integrable:
$\int_{\Omega}\int_{\Omega}|C(\bm{x},\bm{y})|^{2}\,d\bm{x}\,d\bm{y}<\infty$.
Operators defined by such kernels (Hilbert--Schmidt integral operators)
are compact on $L^{2}(\Omega)$ \cite[Prop.~4.7]{Conway1990}. 
\end{itemize}
\end{proof}
The properties established in Proposition~\ref{prop:operator_properties}
allow us to invoke the Spectral Theorem for compact, self-adjoint
operators on Hilbert spaces. 
\begin{thm}[Spectral Theorem, Hilbert-Schmidt version \cite{ReedSimon1980}, \cite{Kreyszig1989}]
\label{thm:spectral} Let $\mathcal{T}:H\to H$ be a compact, self-adjoint
operator on a Hilbert space~$H$. Then there exists a sequence of
real eigenvalues~$(\lambda_{k})_{k=1}^{\infty}$ and an orthonormal
sequence of corresponding eigenfunctions~$(\phi_{k})_{k=1}^{\infty}$
in~$H$ such that $\mathcal{T}\phi_{k}=\lambda_{k}\phi_{k}$. The
sequence of eigenvalues converges to zero ($\lambda_{k}\to0$ as $k\to\infty$),
and the eigenfunctions form an orthonormal basis for $(\ker\mathcal{T})^{\perp}=\overline{\text{range }\mathcal{T}}$.
Any $f\in H$ can be represented as 
\[
f=\sum_{k=1}^{\infty}\langle f,\phi_{k}\rangle\phi_{k}+f_{0},
\]
where $f_{0}\in\ker\mathcal{T}$, and 
\[
\mathcal{T}f=\sum_{k=1}^{\infty}\lambda_{k}\langle f,\phi_{k}\rangle\phi_{k}.
\]
\end{thm}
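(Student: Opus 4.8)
The plan is to follow the classical iterative (variational) construction, whose cornerstone is the lemma: if $\mathcal{T}\neq0$ is compact and self-adjoint, then at least one of $\pm\|\mathcal{T}\|$ is an eigenvalue. First I would record that self-adjointness forces $\langle\mathcal{T}x,x\rangle\in\mathbb{R}$ for all $x$ and that $\|\mathcal{T}\|=\sup_{\|x\|=1}|\langle\mathcal{T}x,x\rangle|$. Let $\mu$ be whichever of $\sup_{\|x\|=1}\langle\mathcal{T}x,x\rangle$ or $\inf_{\|x\|=1}\langle\mathcal{T}x,x\rangle$ has absolute value $\|\mathcal{T}\|$, and pick unit vectors $x_n$ with $\langle\mathcal{T}x_n,x_n\rangle\to\mu$. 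The estimate $\|\mathcal{T}x_n-\mu x_n\|^2=\|\mathcal{T}x_n\|^2-2\mu\langle\mathcal{T}x_n,x_n\rangle+\mu^2\le2\mu^2-2\mu\langle\mathcal{T}x_n,x_n\rangle\to0$ shows $\mathcal{T}x_n-\mu x_n\to0$; compactness yields a subsequence along which $\mathcal{T}x_n$ converges, hence $x_n$ converges to a unit vector $\phi$ (using $\mu\neq0$), and passing to the limit gives $\mathcal{T}\phi=\mu\phi$. Reality of every eigenvalue obtained this way follows from $\lambda=\langle\mathcal{T}\phi,\phi\rangle\in\mathbb{R}$.

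Next I would iterate. Having found orthonormal $\phi_1,\dots,\phi_n$ with $|\lambda_1|\ge\cdots\ge|\lambda_n|$, the subspace $H_n:=\{\phi_1,\dots,\phi_n\}^{\perp}$ is $\mathcal{T}$-invariant (self-adjointness), and $\mathcal{T}|_{H_n}$ is again compact and self-adjoint, so the lemma supplies $\phi_{n+1},\lambda_{n+1}$ with $|\lambda_{n+1}|=\|\mathcal{T}|_{H_n}\|\le|\lambda_n|$. This either terminates (when $\mathcal{T}|_{H_N}=0$) or produces an infinite sequence. In the infinite case $\lambda_k\to0$: otherwise $|\lambda_k|\ge\delta>0$ for all $k$, and then for $j\neq k$, $\|\mathcal{T}\phi_k-\mathcal{T}\phi_j\|^2=\lambda_k^2+\lambda_j^2\ge2\delta^2$, so $\{\mathcal{T}\phi_k\}$ has no convergent subsequence, contradicting compactness applied to the bounded sequence $\{\phi_k\}$.

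Then I would identify $H_0:=\overline{\text{span}\{\phi_k\}}$ with $(\ker\mathcal{T})^{\perp}$. If $x\perp H_0$ then $x\in H_n$ for every $n$, so $\|\mathcal{T}x\|\le\|\mathcal{T}|_{H_n}\|\,\|x\|=|\lambda_{n+1}|\,\|x\|\to0$ (or $\mathcal{T}x=0$ directly in the terminating case), hence $x\in\ker\mathcal{T}$; thus $H_0^{\perp}\subseteq\ker\mathcal{T}$. Conversely $\phi_k=\lambda_k^{-1}\mathcal{T}\phi_k\in\text{range }\mathcal{T}$, and for $z\in\ker\mathcal{T}$, $\langle z,\phi_k\rangle=\lambda_k^{-1}\langle z,\mathcal{T}\phi_k\rangle=\lambda_k^{-1}\langle\mathcal{T}z,\phi_k\rangle=0$, so $\ker\mathcal{T}\subseteq H_0^{\perp}$. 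Hence $\ker\mathcal{T}=H_0^{\perp}$ and $(\ker\mathcal{T})^{\perp}=H_0$; combined with the standard identity $\ker\mathcal{T}=(\text{range }\mathcal{T}^{*})^{\perp}=(\text{range }\mathcal{T})^{\perp}$ for self-adjoint $\mathcal{T}$, this also gives $(\ker\mathcal{T})^{\perp}=\overline{\text{range }\mathcal{T}}$. Finally, the orthogonal decomposition $H=H_0\oplus\ker\mathcal{T}$ writes any $f$ as $f=\sum_k\langle f,\phi_k\rangle\phi_k+f_0$ with $f_0\in\ker\mathcal{T}$, and applying the bounded operator $\mathcal{T}$ term by term (justified by continuity of $\mathcal{T}$ and convergence of the series) yields $\mathcal{T}f=\sum_k\lambda_k\langle f,\phi_k\rangle\phi_k$ since $\mathcal{T}f_0=0$.

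The main obstacle is the opening lemma: extracting an honest eigenvector that realizes $\|\mathcal{T}\|$. Everything afterward is orthogonal-complement bookkeeping plus the monotone iteration; the one genuinely non-formal ingredient is compactness, invoked precisely to upgrade the maximizing sequence $x_n$ (which a priori only satisfies $\mathcal{T}x_n-\mu x_n\to0$) into a convergent sequence with limit an eigenvector, and again to force $\lambda_k\to0$.
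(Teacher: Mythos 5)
Your argument is correct, and it is essentially the classical variational (Rayleigh-quotient) proof of the Hilbert--Schmidt spectral theorem: extract an eigenvalue of maximal modulus via the identity $\|\mathcal{T}\|=\sup_{\|x\|=1}|\langle\mathcal{T}x,x\rangle|$ together with compactness, deflate onto orthogonal complements, use compactness again to force $\lambda_k\to0$, and finish with the orthogonal decomposition $H=\overline{\operatorname{span}\{\phi_k\}}\oplus\ker\mathcal{T}$. The paper itself offers no proof of this statement --- it is quoted as a standard result with references to Reed--Simon and Kreyszig --- and your proof is precisely the one found in those sources, so there is nothing to reconcile. Two small points worth making explicit in a written version: the identity $\|\mathcal{T}\|=\sup_{\|x\|=1}|\langle\mathcal{T}x,x\rangle|$ for self-adjoint operators is itself a lemma requiring a short polarization argument, so either prove it or cite it; and in the case where the iteration terminates (finite rank), the statement's formally infinite sequence should be read as a finite one (or padded appropriately), which you already acknowledge when you treat the terminating case in the inclusion $H_0^{\perp}\subseteq\ker\mathcal{T}$.
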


Applying Theorem~\ref{thm:spectral} to the autocovariance operator~$\mathcal{C}$,
which is compact, self-adjoint, and positive semi-definite (Proposition~\ref{prop:operator_properties}),
guarantees the existence of a countable sequence of non-negative eigenvalues~$\lambda_{k}\ge0$
with $\lambda_{k}\to0$, and corresponding orthonormal eigenfunctions
$\phi_{k}\in L^{2}(\Omega)$. This confirms that the eigenvalue problem
\eqref{eq:kl_eigenproblem}, which defines the KL basis, is well-posed. 

\subsection{Tensor-product polynomial Galerkin approximation}\label{sec:galerkin}

As mentioned earlier, we work on the $D$‑dimensional hyper‑rectangle
\[
\Omega=\prod_{l=1}^{D}(a_{l},b_{l})\subset\mathbb{R}^{D},
\]
as it permits the use of tensor‑product orthonormal Legendre polynomials.
For each coordinate $l=1,\dots,D$, let 
\[
\{\varphi_{\alpha}^{(l)}\}_{\alpha\ge0}\subset L^{2}\bigl((a_{l},b_{l})\bigr)
\]
denote the Legendre polynomials that are \emph{orthonormal} on the
interval~$(a_{l},b_{l})$. For a multi-index $\boldsymbol{\alpha}=(\alpha_{1},\dots,\alpha_{D})\in\mathbb{N}_{0}^{D}$
with $0\le\alpha_{l}\le n$, we define the tensor‑product basis functions
\[
\psi_{\boldsymbol{\alpha}}(\bm{x}):=\prod_{l=1}^{D}\varphi_{\alpha_{l}}^{(l)}(x_{l}),\qquad\bm{x}=(x_{1},\dots,x_{D})\in\Omega.
\]
Since each factor is orthonormal in its own coordinate, the set $\{\psi_{\boldsymbol{\alpha}}\}$
forms an orthonormal basis of 
\[
V_{n}:=\operatorname{span}\bigl\{\psi_{\boldsymbol{\alpha}}:0\le\alpha_{l}\le n\bigr\}\subset L^{2}(\Omega),\qquad N_{n}:=\dim V_{n}=(n+1)^{D}.
\]

Let $\mathcal{C}:L^{2}(\Omega)\to L^{2}(\Omega)$ be the covariance
operator of the random field under consideration. Its Galerkin projection
$\mathcal{C}_{n}:V_{n}\to V_{n}$ is defined by 
\begin{equation}
\bigl(\mathcal{C}_{n}u_{n},v_{n}\bigr)_{L^{2}(\Omega)}:=\bigl(\mathcal{C}u_{n},v_{n}\bigr)_{L^{2}(\Omega)},\qquad u_{n},v_{n}\in V_{n}.\label{eq:galerkinCn}
\end{equation}
With respect to the basis~$\{\psi_{\boldsymbol{\alpha}}\}$, this
operator is represented by the symmetric, positive semi-definite matrix
\[
A_{n}:=\bigl[(\mathcal{C}\psi_{\boldsymbol{\beta}},\psi_{\boldsymbol{\alpha}})_{L^{2}(\Omega)}\bigr]_{\boldsymbol{\alpha},\boldsymbol{\beta}}\in\mathbb{R}^{N_{n}\times N_{n}}.
\]

The Galerkin approximation of the KL expansion is then defined by
combining \eqref{eq:kl_eigenproblem} and \eqref{eq:galerkinCn}:
\[
\bigl(\mathcal{C}\phi_{j}^{(n)},v_{n}\bigr)_{L^{2}(\Omega)}=\lambda_{j}^{(n)}\bigl(\phi_{j}^{(n)},v_{n}\bigr)_{L^{2}(\Omega)},\quad\forall v_{n}\in V_{n},\;j=1,\dots,N_{n}.
\]
With respect to the basis~$\{\psi_{\boldsymbol{\alpha}}\}$, this
becomes the generalized eigenvalue problem $A_{n}\mathbf{u}_{j}^{(n)}=\lambda_{j}^{(n)}M_{n}\mathbf{u}_{j}^{(n)}$.
Note that the mass matrix on the right-hand side is the identity,
since the basis~$\{\psi_{\boldsymbol{\alpha}}\}$ is \emph{orthonormal}
in $L^{2}(\Omega)$. Therefore, the problem reduces to the matrix
eigenvalue problem 
\begin{equation}
A_{n}\mathbf{u}_{j}^{(n)}=\lambda_{j}^{(n)}\mathbf{u}_{j}^{(n)},\qquad\mathbf{u}_{j}^{(n)}\in\mathbb{R}^{N_{n}},\;\lambda_{j}^{(n)}\ge0.\label{eq:disc_KL}
\end{equation}

\begin{thm}
\label{thm:galerkin_wellposed} For every $n\ge0$, the problem \eqref{eq:disc_KL}
admits $N_{n}$ real, non-negative eigenvalues 
\[
0\le\lambda_{N_{n}}^{(n)}\le\cdots\le\lambda_{1}^{(n)},
\]
with corresponding $L^{2}(\Omega)$-orthonormal eigenfunctions 
\[
\phi_{j}^{(n)}:=\sum_{\boldsymbol{\alpha}}u_{j,\boldsymbol{\alpha}}^{(n)}\psi_{\boldsymbol{\alpha}}\in V_{n},\qquad j=1,\dots,N_{n}.
\]
Hence, $\mathcal{C}_{n}$ is diagonalizable, and the discrete KL expansion
\[
X_{n}=\sum_{j=1}^{N_{n}}\sqrt{\lambda_{j}^{(n)}}\xi_{j}\phi_{j}^{(n)},\qquad\xi_{j}\stackrel{\text{i.i.d.}}{\sim}\mathcal{N}(0,1)
\]
is well-defined and forms a Gaussian random field with covariance
kernel 
\[
C_{n}(\bm{x},\bm{y}):=\sum_{j=1}^{N_{n}}\lambda_{j}^{(n)}\phi_{j}^{(n)}(\bm{x})\phi_{j}^{(n)}(\bm{y}),\qquad\bm{x},\bm{y}\in\Omega.
\]
\end{thm}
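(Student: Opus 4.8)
The plan is to reduce everything to the spectral theory of the matrix $A_n$ and then observe that each deterministic/probabilistic consequence follows from standard linear-algebra and Gaussian facts. First I would recall that, as established just before the theorem, $A_n$ is a real symmetric matrix: symmetry follows from the symmetry of the covariance kernel, $C(\bm{x},\bm{y})=C(\bm{y},\bm{x})$, which makes $(\mathcal{C}\psi_{\boldsymbol\beta},\psi_{\boldsymbol\alpha})_{L^2(\Omega)}=(\mathcal{C}\psi_{\boldsymbol\alpha},\psi_{\boldsymbol\beta})_{L^2(\Omega)}$. Moreover $A_n$ is positive semi-definite: for any $\mathbf{u}\in\mathbb{R}^{N_n}$ with associated $u_n=\sum_{\boldsymbol\alpha}u_{\boldsymbol\alpha}\psi_{\boldsymbol\alpha}\in V_n$ we have $\mathbf{u}^\top A_n\mathbf{u}=(\mathcal{C}u_n,u_n)_{L^2(\Omega)}\ge 0$ by the positive semi-definiteness of $\mathcal{C}$ proved in Proposition~\ref{prop:operator_properties}. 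The finite-dimensional spectral theorem then yields an orthonormal eigenbasis $\{\mathbf{u}_j^{(n)}\}_{j=1}^{N_n}$ of $\mathbb{R}^{N_n}$ with real eigenvalues, and positive semi-definiteness forces $\lambda_j^{(n)}\ge 0$; ordering them decreasingly gives the claimed chain of inequalities.

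Next I would transfer orthonormality from $\mathbb{R}^{N_n}$ to $V_n\subset L^2(\Omega)$. Since $\{\psi_{\boldsymbol\alpha}\}$ is $L^2(\Omega)$-orthonormal, the coordinate map $\mathbf{u}\mapsto\sum_{\boldsymbol\alpha}u_{\boldsymbol\alpha}\psi_{\boldsymbol\alpha}$ is an isometry from $\mathbb{R}^{N_n}$ (Euclidean) onto $V_n$ ($L^2$), so $(\phi_i^{(n)},\phi_j^{(n)})_{L^2(\Omega)}=(\mathbf{u}_i^{(n)})^\top\mathbf{u}_j^{(n)}=\delta_{ij}$. This establishes that the $\phi_j^{(n)}$ form an $L^2(\Omega)$-orthonormal basis of $V_n$, hence $\mathcal{C}_n$, which in this basis acts as the diagonal matrix $\operatorname{diag}(\lambda_1^{(n)},\dots,\lambda_{N_n}^{(n)})$, is diagonalizable.

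For the final two assertions I would argue as follows. The random field $X_n=\sum_{j=1}^{N_n}\sqrt{\lambda_j^{(n)}}\,\xi_j\,\phi_j^{(n)}$ is a finite linear combination of the deterministic functions $\phi_j^{(n)}$ with i.i.d. $\mathcal{N}(0,1)$ coefficients, so for each fixed $\bm{x}\in\Omega$ the value $X_n(\bm{x})$ is a finite sum of independent Gaussians and hence Gaussian; more generally any finite collection $(X_n(\bm{x}^{(1)}),\dots,X_n(\bm{x}^{(m)}))$ is a linear image of the Gaussian vector $(\xi_1,\dots,\xi_{N_n})$ and is therefore jointly Gaussian, so $X_n$ is a Gaussian random field (well-defined since the sum is finite, with no convergence issue). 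Its mean is zero because $\mathbb{E}[\xi_j]=0$, and its covariance is computed directly: $\mathbb{E}[X_n(\bm{x})X_n(\bm{y})]=\sum_{i,j}\sqrt{\lambda_i^{(n)}\lambda_j^{(n)}}\,\mathbb{E}[\xi_i\xi_j]\,\phi_i^{(n)}(\bm{x})\phi_j^{(n)}(\bm{y})=\sum_{j=1}^{N_n}\lambda_j^{(n)}\phi_j^{(n)}(\bm{x})\phi_j^{(n)}(\bm{y})=C_n(\bm{x},\bm{y})$, using $\mathbb{E}[\xi_i\xi_j]=\delta_{ij}$.

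There is no serious obstacle here; the statement is essentially a packaging of the finite-dimensional spectral theorem together with the isometry between coordinates and $V_n$ and elementary properties of Gaussian vectors. The one point requiring a little care is the bookkeeping that the eigenvectors can be chosen $\ell^2$-orthonormal even in the presence of repeated eigenvalues — this is the standard spectral theorem for symmetric matrices — and that this $\ell^2$-orthonormality is exactly what $L^2(\Omega)$-orthonormality of the $\phi_j^{(n)}$ reduces to under the orthonormal basis $\{\psi_{\boldsymbol\alpha}\}$. Everything else is a direct computation.
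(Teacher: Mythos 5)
Your proof is correct, and it is exactly the standard argument the paper relies on: the paper states this theorem without an explicit proof, treating it as an immediate consequence of the fact (noted when $A_n$ is defined) that $A_n$ is symmetric and positive semi-definite, together with the finite-dimensional spectral theorem, the isometry induced by the orthonormal basis $\{\psi_{\boldsymbol{\alpha}}\}$, and elementary properties of Gaussian vectors. Nothing in your write-up deviates from or adds gaps to that intended route.
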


\subsection{Tensor-structured assembly of the Galerkin matrices }\label{subsec:tensor-assembly}

The squared-exponential representation derived in Section~\ref{sec:numerical_examples_square_approx}
yields a separable approximation of the isotropic covariance kernel:
\[
\mathcal{C}(\bm{x},\bm{y})\;\approx\;C_{\mathrm{sep}}(\bm{x},\bm{y}):=\sum_{i=1}^{k}a_{i}\prod_{l=1}^{D}\exp\bigl(-b_{i}(x_{l}-y_{l})^{2}\bigr),\qquad a_{i}>0,\;b_{i}>0,
\]
where $D$ is the spatial dimension and $k$ the separation rank.
Together with the tensor-product Legendre basis~$\{\psi_{\boldsymbol{\alpha}}\}_{0\le\alpha_{l}\le n}$
introduced in Section~\ref{sec:galerkin}, the entries of the Galerkin
matrix $A_{n}\in\mathbb{R}^{N_{n}\times N_{n}}$ (with $N_{n}=(n+1)^{D}$)
are given by 
\[
(A_{n})_{\boldsymbol{\beta},\boldsymbol{\alpha}}=\iint_{\Omega\times\Omega}C_{\mathrm{sep}}(\bm{x},\bm{y})\psi_{\boldsymbol{\alpha}}(\bm{y})\psi_{\boldsymbol{\beta}}(\bm{x})\;\mathrm{d}\bm{y}\,\mathrm{d}\bm{x}.
\]

Because both the kernel and the basis are fully separable, each double
integral factorizes into a product of one-dimensional integrals. Writing
$\boldsymbol{\alpha}=(\alpha_{1},\dots,\alpha_{D})$ and $\boldsymbol{\beta}=(\beta_{1},\dots,\beta_{D})$,
we obtain 
\begin{equation}
(A_{n})_{\boldsymbol{\beta},\boldsymbol{\alpha}}=\sum_{i=1}^{k}a_{i}\prod_{l=1}^{D}\underbrace{\int_{a_{l}}^{b_{l}}\int_{a_{l}}^{b_{l}}e^{-b_{i}(x-y)^{2}}\varphi_{\alpha_{l}}^{(l)}(y)\varphi_{\beta_{l}}^{(l)}(x)\,\mathrm{d}y\,\mathrm{d}x}_{=:\bigl(\mathbf{A}_{n,i}^{(l)}\bigr)_{\beta_{l}\alpha_{l}}},\label{eq:sub_matrix}
\end{equation}
where $\mathbf{A}_{n,i}^{(l)}\in\mathbb{R}^{(n+1)\times(n+1)}$ depends
only on the polynomial degree~$n$, the rank index~$i$, and the
spatial direction~$l$. Consequently, 
\begin{equation}
A_{n}=\sum_{i=1}^{k}a_{i}\;\bigotimes_{l=1}^{D}\mathbf{A}_{n,i}^{(l)}.\label{eq:tensor_matrix}
\end{equation}

\subsubsection{Computational advantages}

The use of the squared-exponential approximation offers two main computational
advantages: 
\begin{itemize}
\item \emph{Quadrature effort.} Direct assembly of $A_{n}$ would require
$N_{n}^{2}=(n+1)^{2D}$ integrals over $\mathbb{R}^{2D}$. Formula~\eqref{eq:tensor_matrix}
reduces this to $kD(n+1)^{2}$ integrals over \emph{two-dimensional}
domains, representing a dramatic saving even for moderate values of~$D$. 
\item \emph{Matrix--vector products.} Naively applying $A_{n}$ to a vector
of length~$N_{n}$ incurs a cost of $\mathcal{O}(N_{n}^{2})=\mathcal{O}((n+1)^{2D})$
operations. By exploiting the tensor structure, this cost is reduced
to $\mathcal{O}\bigl(kD(n+1)^{D+1}\bigr)$. 
\end{itemize}
Consider a realistic setting with polynomial degree $n=100$ in $D=2$
dimensions and a tensor sum with $k=16$ terms. The naive cost of
a matrix--vector multiplication grows as $\mathcal{O}((n+1)^{2D})$,
which evaluates to about $101^{4}\approx10^{8}$ operations. In contrast,
the cost of the tensor-structured approach scales as $\mathcal{O}(kD(n+1)^{D+1})$,
which corresponds to approximately $16\cdot2\cdot101^{3}\approx3.3\times10^{7}$
operations. This results in roughly 3x speedup of matrix--vector
multiplication.

The difference in integration cost is even more dramatic. Without
separability, assembling $A_{n}$ would require $N_{n}^{2}$ integrals
over $2D$ dimensions. For $n=100$ and $D=2$, this corresponds to
approximately $10^{8}$ integrals over~$\mathbb{R}^{4}$. Using a
tensor-product quadrature grid with 400 points per dimension, this
would demand about $400^{4}=2.56\times10^{10}$ function evaluations
per integral, totaling over $10^{18}$ evaluations. In contrast, the
separated form involves only $kD(n+1)^{2}\approx3.3\times10^{5}$
integrals over~$\mathbb{R}^{2}$, each requiring only $400^{2}=1.6\times10^{5}$
function evaluations, for a total cost of approximately $5.2\times10^{10}$
evaluations. 

\subsubsection{Parity structure and block decomposition}

\label{subsec:parity-blocks} 

Additional speed-ups in the assembly, matrix--vector products, and
final eigensolve are achieved by exploiting the \emph{parity} of the
Legendre basis functions appearing in the one-dimensional sub-matrices
$\mathbf{A}_{n,i}^{(l)}$.

After shifting each spatial interval to be centered at the origin
$(-\gamma,\gamma)$, the Legendre polynomials satisfy 
\[
\varphi_{\alpha}(-x)=(-1)^{\alpha}\varphi_{\alpha}(x),\qquad\alpha=0,1,\dots,n.
\]
Thus, $\varphi_{\alpha}$ is \emph{even} when $\alpha$ is even and
\emph{odd} when $\alpha$ is odd.

For each kernel term in the separated covariance approximation, the
one-dimensional factors introduced in~\eqref{eq:sub_matrix} are
given by 
\[
\bigl(\mathbf{A}_{n,i}^{(l)}\bigr)_{\alpha\beta}=\int_{-\gamma}^{\gamma}\int_{-\gamma}^{\gamma}\exp\bigl(-b_{i}(x-y)^{2}\bigr)\varphi_{\alpha}(y)\varphi_{\beta}(x)\mathrm{d}y\mathrm{d}x,\qquad0\le\alpha,\beta\le n,\;i=1,\dots,k.
\]

\begin{lem}[parity orthogonality \cite{Beres2018,Beres2023}]
\label{lem:parity} Let $p:[0,\infty)\to\mathbb{R}$, and let $\varphi_{\text{odd}}$
and $\varphi_{\text{even}}$ be odd and even functions, respectively,
on $(-\gamma,\gamma)$. Then 
\[
\int_{-\gamma}^{\gamma}\int_{-\gamma}^{\gamma}p\bigl(|x-y|\bigr)\varphi_{\text{odd}}(x)\varphi_{\text{even}}(y)\mathrm{d}y\mathrm{d}x=0.
\]
\end{lem}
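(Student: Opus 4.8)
The statement is a pure symmetry fact: the double integral is taken over the origin-symmetric square $(-\gamma,\gamma)^2$, the factor $p(|x-y|)$ is invariant under the simultaneous sign flip $(x,y)\mapsto(-x,-y)$, and the product $\varphi_{\text{odd}}(x)\varphi_{\text{even}}(y)$ picks up exactly one sign change under that flip. Hence the integral equals its own negative and must vanish. The plan is to make this precise with a single change of variables.

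\textbf{Key steps.} Denote the integral by
\[
I=\int_{-\gamma}^{\gamma}\int_{-\gamma}^{\gamma}p\bigl(|x-y|\bigr)\varphi_{\text{odd}}(x)\varphi_{\text{even}}(y)\,\mathrm{d}y\,\mathrm{d}x .
\]
First I would note that $I$ is well defined: in all applications $p(d)=e^{-b_i d^2}$ and $\varphi_{\text{odd}},\varphi_{\text{even}}$ are (shifted) Legendre polynomials, so the integrand is continuous on the compact square and the integral is finite. Next I apply the substitution $x=-u$, $y=-v$, whose Jacobian has absolute value $1$ and which maps $(-\gamma,\gamma)^2$ onto itself. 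Under it, $|x-y|=|{-u}+v|=|u-v|$, so $p(|x-y|)=p(|u-v|)$; moreover $\varphi_{\text{odd}}(-u)=-\varphi_{\text{odd}}(u)$ and $\varphi_{\text{even}}(-v)=\varphi_{\text{even}}(v)$ by the assumed parities. Collecting these,
\[
I=\int_{-\gamma}^{\gamma}\int_{-\gamma}^{\gamma}p\bigl(|u-v|\bigr)\bigl(-\varphi_{\text{odd}}(u)\bigr)\varphi_{\text{even}}(v)\,\mathrm{d}v\,\mathrm{d}u=-I .
\]
Therefore $2I=0$, i.e.\ $I=0$, which is the claim. (Equivalently, one can observe directly that the map $(x,y)\mapsto(-x,-y)$ is a measure-preserving involution of the domain under which the integrand is odd, so its integral is zero.)

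\textbf{Main obstacle.} There is essentially no analytic difficulty here; the only thing to be careful about is the bookkeeping of which variable carries the odd parity and to confirm that the kernel depends on $x-y$ only through $|x-y|$, so that the sign flip leaves it untouched. Integrability is a non-issue in the intended application but is worth stating explicitly so the change-of-variables manipulation is justified.
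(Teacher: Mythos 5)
Your proof is correct: the sign-flip substitution $(x,y)\mapsto(-x,-y)$, under which $p(|x-y|)$ is invariant while $\varphi_{\text{odd}}(x)\varphi_{\text{even}}(y)$ changes sign, immediately gives $I=-I$ and hence $I=0$. The paper itself states this lemma without proof, only citing \cite{Beres2018,Beres2023}, and your symmetry argument is precisely the standard one-line justification expected there; your remark on integrability (automatic in the intended application with $p(d)=e^{-b_i d^2}$ and polynomial factors) appropriately covers the only hypothesis the bare statement leaves implicit.
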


Taking $p(t)=\exp(-b_{i}t^{2})$ in Lemma~\ref{lem:parity} yields
\[
(\mathbf{A}_{n,i}^{(l)})_{\alpha\beta}=0\quad\Longleftrightarrow\quad\alpha+\beta\ \text{is odd}.
\]
Permuting the basis so that the even degrees precede the odd degrees
partitions each $\mathbf{A}_{n,i}^{(l)}$ into two diagonal blocks:
\[
\mathbf{A}_{n,i}^{(l)}\;=\;\begin{bmatrix}\mathbf{A}_{n,i,\mathrm{even}}^{(l)} & \mathbf{0}\\
\mathbf{0} & \mathbf{A}_{n,i,\mathrm{odd}}^{(l)}
\end{bmatrix},\qquad i=1,\dots,k,\;l=1,\dots,D,
\]
with $\mathbf{A}_{n,i,\mathrm{even}}^{(l)}\in\mathbb{R}^{n_{e}\times n_{e}}$,
$\mathbf{A}_{n,i,\mathrm{odd}}^{(l)}\in\mathbb{R}^{n_{o}\times n_{o}}$,
where $n_{e}=\lceil(n+1)/2\rceil$ and $n_{o}=\lfloor(n+1)/2\rfloor$. 

Let 
\[
\epsilon=(\epsilon_{1},\dots,\epsilon_{D})\in\{0,1\}^{D},\qquad\epsilon_{l}=0\;\text{for even,}\;\epsilon_{l}=1\;\text{for odd parity}.
\]
Define the parity-specific factors 
\[
\mathbf{A}_{n,i}^{(l,\epsilon_{l})}:=\begin{cases}
\mathbf{A}_{n,i,\mathrm{even}}^{(l)}, & \epsilon_{l}=0,\\
\mathbf{A}_{n,i,\mathrm{odd}}^{(l)}, & \epsilon_{l}=1.
\end{cases}
\]
After a global even/odd permutation $P$, the Galerkin matrix becomes
block diagonal: 
\[
\widetilde{A}_{n}:=P^{\top}A_{n}P=\bigoplus_{\epsilon\in\left\{ 0,1\right\} ^{D}}A_{n}^{\epsilon},\qquad A_{n}^{\epsilon}=\sum_{i=1}^{k}a_{i}\bigotimes_{l=1}^{D}\mathbf{A}_{n,i}^{(l,\epsilon_{l})}.
\]
Each block $A_{n}^{\epsilon}$ acts only on the tensor-product subspace
spanned by basis functions with the prescribed parity vector $\epsilon$.
Its dimension is given by 
\[
N_{\epsilon}=\prod_{l=1}^{D}\bigl(n_{e}^{1-\epsilon_{l}}n_{o}^{\epsilon_{l}}\bigr)\;\le\;\max\{n_{e},n_{o}\}^{D}\approx\bigl(\tfrac{n+1}{2}\bigr)^{D},
\]
whereas the full space has dimension $N_{n}=(n+1)^{D}$. Thus, each
block is smaller by a factor of approximately $2^{D}$. 

Because the blocks $A_{n}^{\epsilon}$ are mutually independent, the
eigenproblems $A_{n}^{\epsilon}\mathbf{u}=\lambda\mathbf{u}$ can
be solved separately and fully in parallel. One iteration of an Arnoldi-type
routine, such as \texttt{eigs}, requires a single matrix--vector
multiplication. For the \emph{full} tensor matrix, this costs 
\[
\mathcal{O}\bigl(kD(n+1)^{D+1}\bigr)\qquad\text{(see Section~\ref{subsec:tensor-assembly}).}
\]
After the parity permutation, the largest block has dimension $\smash{\bigl(\tfrac{n+1}{2}\bigr)^{D}}$;
consequently, 
\[
\mathcal{O}\bigl(kD((n+1)/2)^{D+1}\bigr)=\frac{1}{2^{D+1}}\mathcal{O}\bigl(kD(n+1)^{D+1}\bigr),
\]
i.e., each block--matvec is $2^{D+1}$ times cheaper than the corresponding
operation with the unsplit matrix, while there are only $2^{D}$ such
blocks to process. 

When the goal is to compute the leading $M2^{D}$ eigenpairs, one
now computes $M$ eigenpairs on each of the $2^{D}$ blocks, instead
of computing $M2^{D}$ eigenpairs for the full matrix. Because Krylov
and Arnoldi algorithms grow more than linearly in cost with the number
of requested eigenpairs, this block-wise approach offers an additional
practical acceleration beyond the formal factor of $2$. 

\subsection{Stable quadrature for the one--dimensional blocks }\label{subsec:duffy-quadrature}

The accurate evaluation of the matrices $\mathbf{A}_{n,i}^{(l)}$
in~(\ref{eq:sub_matrix}) is the final component that remains to
be addressed. Although one can symbolically differentiate the Legendre
polynomials and integrate them in closed form, the term $\exp\bigl(-b_{i}(x-y)^{2}\bigr)$
renders the computation numerically unstable, even for low polynomial
orders, when $b_{i}\gg1$. Instead, we rely on numerical quadrature,
preceded by a coordinate transformation that broadens the narrow peak
of the kernel. We demonstrate the approach on the reference square
$[0,1]^{2}$; however, generalization to any square is straightforward---for
example, via an initial substitution into $[0,1]^{2}$, which simply
modifies $b_{i}$ and introduces a multiplicative factor in the integral. 

\subsubsection{Duffy split and algebraic stretching}

On the reference square $[0,1]^{2}$, the kernel decays from $1$
to $\mathrm{e}^{-1}$ once $|x-y|\gtrsim b_{i}^{-1/2}$. For large
$b_{i}$, the informative part is confined to a thin strip around
the diagonal $x=y$, so a naive tensor Gauss rule wastes almost all
of its points. 

The integral is first split along the diagonal, 
\begin{align*}
\int_{0}^{1}\int_{0}^{1}F(x,y)\,\mathrm{d}y\,\mathrm{d}x & =\int_{0}^{1}\int_{0}^{x}F(x,y)\,\mathrm{d}y\,\mathrm{d}x+\int_{0}^{1}\int_{x}^{1}F(x,y)\,\mathrm{d}y\,\mathrm{d}x,\\
F(x,y) & =\mathrm{e}^{-b_{i}(x-y)^{2}}\phi_{\alpha}(x)\phi_{\beta}(y),
\end{align*}
and each triangle is mapped back to the unit square by 
\[
(x,y)=\bigl(\xi,(1-\eta)\xi\bigr)\quad\text{or}\quad(x,y)=\bigl(1-\xi,(\eta-1)\xi+1\bigr),\qquad(\xi,\eta)\in[0,1]^{2}.
\]
The Jacobian is $\xi$, and $(x-y)^{2}$ becomes $\xi^{2}\eta^{2}$.
Adding both contributions yields 
\[
\int_{0}^{1}\int_{0}^{1}\xi\,\mathrm{e}^{-b_{i}\xi^{2}\eta^{2}}\Bigl[\,\phi_{\alpha}(\xi)\,\phi_{\beta}\bigl((1-\eta)\xi\bigr)+\,\phi_{\alpha}(1-\xi)\,\phi_{\beta}\bigl((\eta-1)\xi+1\bigr)\Bigr]\,\mathrm{d}\eta\,\mathrm{d}\xi.
\]
The peak is now restricted to the edges $\xi=0$ and $\eta=0$. As
already shown, the integral is nonzero only for matching parity of
$\phi_{\alpha}$ and $\phi_{\beta}$. Additionally, if the parity
matches, we have $\phi_{\alpha}(x)=\phi_{\alpha}(1-x)$ for even $\phi_{\alpha}$,
and $\phi_{\alpha}(x)=-\phi_{\alpha}(1-x)$ for odd $\phi_{\alpha}$,
resulting in the simplification of the integral to 
\[
2\int_{0}^{1}\int_{0}^{1}\xi\,\mathrm{e}^{-b_{i}\xi^{2}\eta^{2}}\,\phi_{\alpha}(\xi)\,\phi_{\beta}\bigl((1-\eta)\xi\bigr)\,\mathrm{d}\eta\,\mathrm{d}\xi.
\]

To distribute the peak from the edges over a larger portion of the
square, we stretch those edges algebraically: 
\[
\xi=u^{g_{x}},\qquad\eta=v^{g_{y}},\qquad(u,v)\in[0,1]^{2},
\]
with exponents $g_{x},g_{y}\ge1$. The resulting integral becomes
\[
2\int_{0}^{1}\int_{0}^{1}g_{x}g_{y}\,u^{g_{x}}v^{g_{y}}\exp\bigl(-b_{i}u^{2g_{x}}v^{2g_{y}}\bigr)\Theta_{\alpha\beta}\bigl(u^{g_{x}},v^{g_{y}}\bigr)\,\mathrm{d}v\,\mathrm{d}u,
\]
where $\Theta_{\alpha\beta}$ contains the polynomial factors. Since
we use this formulation to compute the entire matrix (i.e., the same
maximal polynomial degree for both $\alpha$ and $\beta$), we may
restrict ourselves to the same constant for both dimensions, $g=g_{x}=g_{y}$.

\subsubsection{Numerical experiment}

\label{subsec:duffy-numex}

\begin{figure}
\begin{centering}
\subfloat[Basis size 30]{\begin{centering}
\includegraphics{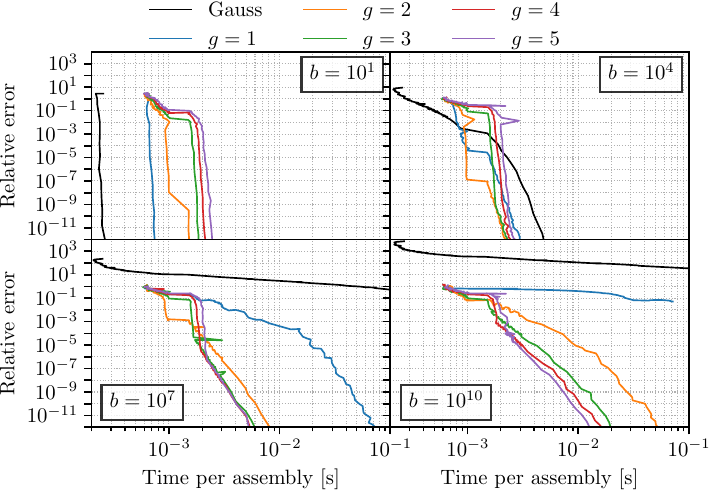} 
\par\end{centering}
}
\par\end{centering}
\begin{centering}
\subfloat[Basis size 60]{\begin{centering}
\includegraphics[clip,viewport=0bp 0bp 341bp 210bp]{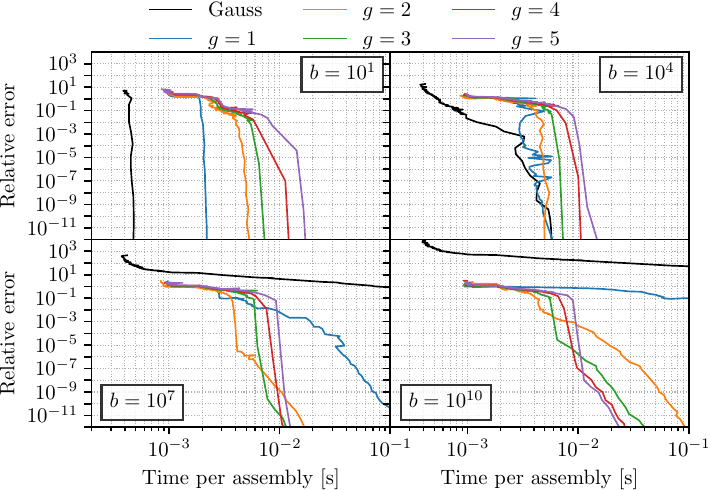} 
\par\end{centering}
}
\par\end{centering}
\centering{}\caption{Efficiency of the proposed Duffy-based quadrature (color; algebraic
exponents $g=1\ldots5$) compared to a standard tensor Gauss rule
(black). Each panel corresponds to a different kernel coefficient
$b_{i}$.}
\label{fig:Comparison-of-efficiency-quadrature} 
\end{figure}

The algebraic stretching introduced above entails additional computational
effort compared to a standard tensor Gauss rule. Each basis polynomial
must now be evaluated at all points of the tensor grid, as its argument
is $\bigl((1-\eta)\xi\bigr)$. This means we transition from a linear
to a quadratic number of evaluation points, resulting in increased
complexity when aggregating contributions. 

To quantify the overall impact, we measured the time required to assemble
a single matrix $\mathbf{A}_{n,i}^{(l)}$ (for 30/60 basis functions,
i.e., up to polynomials of degree 29/59, evaluated separately for
even and odd blocks) on an \textbf{AMD Ryzen 7 7735U} CPU using the
vectorized \texttt{NumPy} and \texttt{opt\_einsum} packages, along
with an optimized BLAS/LAPACK build via the \texttt{blas-openblas}
package. Figure~\ref{fig:Comparison-of-efficiency-quadrature} shows
the resulting runtime versus the achieved relative Frobenius error
of the assembled matrix for 
\[
b_{i}\in\left\{ 10^{1},10^{4},10^{7},10^{10}\right\} ,\qquad g\in\{1,2,3,4,5\},
\]
alongside the reference tensor Gauss rule (black). 

We observe the following: 
\begin{itemize}
\item The plain Gauss rule is slightly faster for moderate values $b_{i}<10^{4}$,
as the kernel remains sufficiently broad. In this setting, the overhead
from the transformation dominates. 
\item For each value of $b_{i}$, there exists an optimal value of $g$.
Within the tested range, this optimal $g$ varies from $2$ to $5$.
In practice, values of $b_{i}$ exceeding $10^{10}$ are uncommon,
except in highly precise squared-exponential approximations. This
is primarily because the stretching induced by high values of $g$
also compresses regions farther from the axes, increasing the frequency
of oscillating basis functions and, consequently, the number of quadrature
points needed for accurate integration. 
\item Comparing the two basis sizes reveals similar overall behavior. For
lower values of $b_{i}$, the reference tensor Gauss rule becomes
marginally more efficient than the Duffy-based quadrature at higher
polynomial degrees. 
\end{itemize}
In summary, while the Duffy transformation introduces a modest overhead
for wide kernels, it becomes essential once $b_{i}\gtrsim10^{4}$.
For the extreme localization characteristic of high-rank squared-exponential
approximations ($b_{i}\ge10^{7}$), the plain Gauss tensor product
fails to achieve useful accuracy at a reasonable cost, whereas the
proposed scheme remains both stable and efficient. 

\section{Numerical experiments -- 2-D sampling }\label{sec:numerical-samples}

With the separable squared-exponential covariance fits from Section~\ref{sec:numerical_examples_square_approx}
and the tensor-product Legendre Galerkin discretisation introduced
in Section~\ref{sec:galerkin}, we computed truncated Karhunen--Loève
expansions on the unit square $[0,1]^{2}$ for all kernels listed
in Subsection~\ref{subsec:Problem-setting-kernels}, along with the
squared-exponential kernel $e^{-d^{2}}$. The parity block strategy
from Section~\ref{subsec:parity-blocks} enables efficient evaluation
of the leading eigenpairs.

\begin{figure}[H]
\centering{}\includegraphics{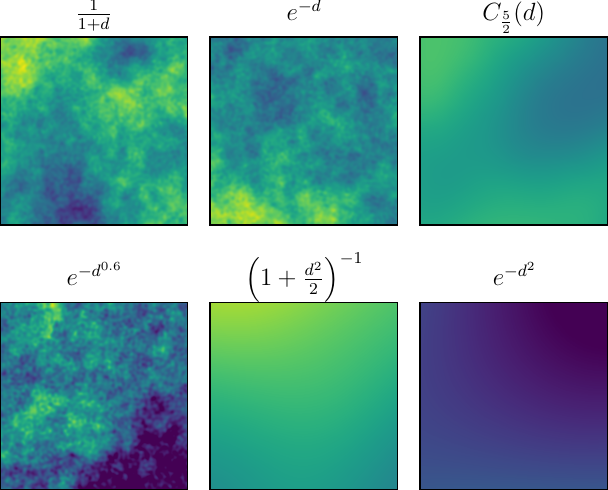} \caption{Random samples of the 2-D Gaussian fields on $[0,1]^{2}$ for the
benchmark kernels and their squared-exponential approximations (with
precision $10^{-6}$); see Table~\ref{tab:ranks}.}
\label{fig:kl-samples} 
\end{figure}

\begin{figure}[H]
\centering{}\includegraphics{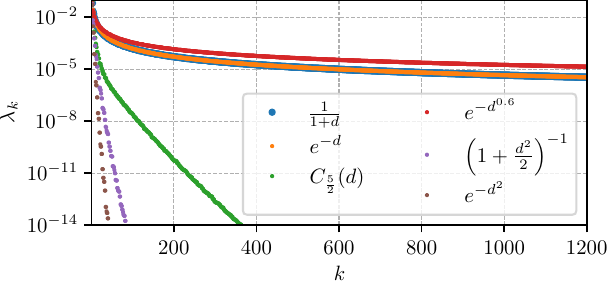} \caption{Eigenvalues of the 2-D KL expansion for the benchmark kernels and
their squared-exponential approximations (with precision $10^{-6}$);
see Table~\ref{tab:ranks}.}
\label{fig:kl-eigenvalues} 
\end{figure}

To ensure accurate approximations of both samples and eigenvalues,
we use the highest-precision squared-exponential fit, i.e., $\varepsilon=10^{-6}$,
with 150 basis functions in each dimension ($n=149$), and compute
the 1500 largest eigenvalues for each parity block. Figure~\ref{fig:kl-samples}
presents representative random-field realisations obtained by drawing
independent standard Gaussian coefficients and summing the resulting
KL series, while Figure~\ref{fig:kl-eigenvalues} shows the numerically
computed eigenvalues. A clear relationship is evident between eigenvalue
decay and the short-range correlation between points. 

\section{Accuracy of the one--dimensional KL approximation }\label{sec:error-1d}

We conclude with a rough a~posteriori error assessment of the KL
discretisation. First, we consider the residual norm of an eigenpair
approximation: 
\[
\mathcal{R}(u,\lambda)\;=\;\left\Vert \int_{0}^{1}C(x,y)\,u(y)\,\mathrm{d}y\;-\;\lambda\,u(x)\right\Vert _{L^{2}(0,1)},
\]
evaluated via numerical quadrature (accounting for possible non-smoothness
of the kernel at $x=y$). Second, we assess the error in reconstructing
the covariance kernel: 
\[
\lVert C-C_{N}\rVert_{L^{2}([0,1]^{2})}\quad\left(C_{N}(x,y)={\textstyle \sum_{j=1}^{N}\lambda_{j}\phi_{j}(x)\phi_{j}(y)}\right),
\]
again computed via numerical quadrature, with care taken near the
potential non-smoothness at $x=y$.

For both types of error estimation, we examine the effects of the
squared-exponential approximation and the polynomial Galerkin approximation.

The numerical study is restricted to $\Omega=[0,1]$, as the computational
effort required to estimate $\mathcal{R}(u,\lambda)$ and $\lVert C-C_{N}\rVert_{L^{2}}$
becomes prohibitive in higher dimensions. However, the qualitative
behaviour is expected to carry over to higher dimensions. 

\subsection{Residual decay of selected eigenpairs}

\label{subsec:residuals} 

Figure~\ref{fig:residual-convergence} shows the dependence of the
residual error $\mathcal{R}(u_{j},\lambda_{j})$ on the polynomial
degree of the Galerkin basis. It is presented for the 1st, 15th, and
30th eigenpairs of the exponential and Matérn\,($\nu=\tfrac{5}{2}$)
kernels, as the tolerance~$\varepsilon$ of the squared-exponential
fit is refined ($10^{-2},10^{-4},10^{-6}$); see Table~\ref{tab:ranks}.
For the dominant mode ($j=1$), the residual decays exponentially
until it reaches the plateau set by~$\varepsilon$. Higher modes
may already be at their lowest achievable error across all polynomial
approximations, as their exact eigenvalues are smaller than~$\varepsilon$;
see Table~\ref{tab:eigvals-small}. 

\begin{figure}[H]
\centering \includegraphics{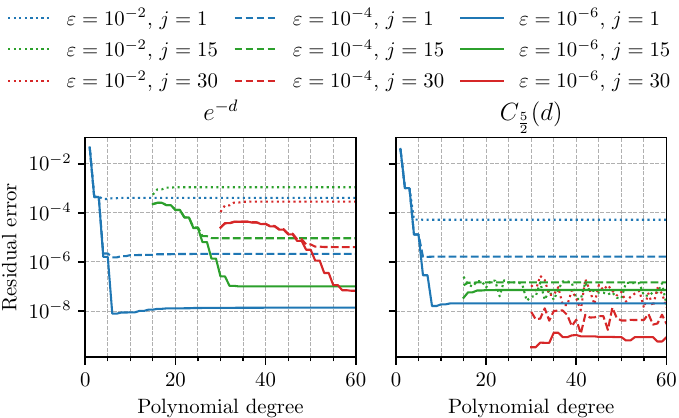} \caption{$L^{2}$ residual $\mathcal{R}(u_{j},\lambda_{j})$ for $j=1,15,30$
under three squared-exponential tolerances $\varepsilon\in\{10^{-2},10^{-4},10^{-6}\}$;
see Table~\ref{tab:ranks}.}
\label{fig:residual-convergence} 
\end{figure}

\begin{table}[H]
\centering \caption{Exact eigenvalues corresponding to Figure~\ref{fig:residual-convergence}.
Values far below the fitting tolerance explain the instant convergence
of the residual.}
\label{tab:eigvals-small} %
\begin{tabular}{lccc}
\toprule 
kernel  & $\lambda_{1}$  & $\lambda_{15}$  & $\lambda_{30}$\tabularnewline
\midrule 
exponential  & $7.388\times10^{-1}$  & $1.031\times10^{-3}$  & $2.409\times10^{-4}$\tabularnewline
Matérn $\nu=\tfrac{5}{2}$  & $8.950\times10^{-1}$  & $3.210\times10^{-9}$  & $5.661\times10^{-16}$\tabularnewline
\bottomrule
\end{tabular}
\end{table}

\subsection{Convergence of the truncated covariance}

\label{subsec:cov-error} 

Figure~\ref{fig:cov-error} reports the error in reconstructing the
covariance, $\lVert C-C_{N}\rVert_{L^{2}}$, for all five benchmark
kernels as the degree of the polynomial Galerkin approximation increases.
Two representative squared-exponential tolerances are used: $\varepsilon=10^{-3}$
and $10^{-6}$ (see Table~\ref{tab:ranks}). Here, we use the full
spectral decomposition---not just the largest eigenvalues, as this
is feasible in the 1D case---to ensure that the estimate of $\lVert C-C_{N}\rVert_{L^{2}}$
remains accurate. The convergence rate closely follows the eigenvalue
decay shown in Figure~\ref{fig:kl-eigenvalues}. Note that the results
in the figure correspond to the 2D case. 

An important observation can be made from the exponential kernel results
in Figure~\ref{fig:cov-error}. For kernels with slowly decaying
eigenvalues, which require more squared-exponential terms to achieve
a given $\varepsilon$, larger values of $\varepsilon$ may still
be acceptable---particularly when only a limited number of eigenvalues
are needed. In such cases, the computed eigenpairs yield a similar
approximation of $C$ for both fine and coarse tolerances. 

\begin{figure}[H]
\centering \includegraphics{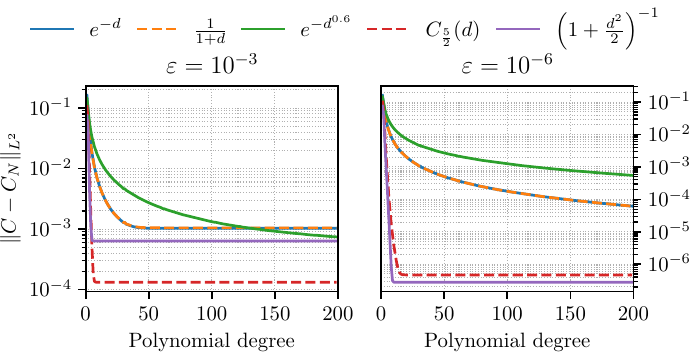} \caption{$\lVert C-C_{N}\rVert_{L^{2}}$ error of the truncated covariance
$C_{N}$ for $\varepsilon=10^{-3}$ (left) and $\varepsilon=10^{-6}$
(right); see Table~\ref{tab:ranks}.}
\label{fig:cov-error} 
\end{figure}

\section{Conclusions}\label{sec:Conclusions}

The numerical approximation of the Karhunen--Loève expansion relies
on orthogonal polynomials at several stages, making it natural to
present both topics together in a single text. Section~\ref{sec:derivation}
demonstrates how the three-term recurrence relation for Legendre polynomials
can be derived without resorting to differential equations or generating
functions, thereby keeping the argument concise, clear, and accessible
to undergraduates.

Sections~\ref{sec:gauss_sum} and \ref{sec:numerical_examples_square_approx}
explain how every positive-definite isotropic covariance function
can be expressed as an integral of squared-exponentials, and how a
finite, non-negative squared-exponential approximation yields a close
representation. Table~\ref{tab:ranks} and Figure~\ref{fig:cov-approx}
illustrate that the approximation error decreases almost geometrically
as the rank increases.

Given a separable kernel, Section~\ref{sec:galerkin} constructs
a Legendre--Galerkin matrix that decomposes into Kronecker products
of small one-dimensional blocks. Due to cancellation between even
and odd Legendre modes, each block further splits, reducing both storage
and computational effort in higher dimensions. The Duffy mapping with
algebraic stretching, employed in Section~\ref{subsec:duffy-quadrature},
preserves quadrature stability even when the exponents in the squared-exponential
approximation become very large; see Figure~\ref{fig:Comparison-of-efficiency-quadrature}.

Figure~\ref{fig:kl-samples} presents two-dimensional samples generated
by the method, while Figure~\ref{fig:kl-eigenvalues} displays the
corresponding eigenvalues. Together, they demonstrate that the numerical
scheme accurately captures both the spatial structure and spectral
decay of the target fields.

All scripts used to generate the tables and figures are available
at \url{https://github.com/Beremi/KL_decomposition}, facilitating
verification and reuse of the results with minimal effort.

\bibliographystyle{siam}
\bibliography{Beres_AOM_2025_submit}

\end{document}